\documentclass[11pt,letterpaper]{article}

\usepackage[margin=1in]{geometry}        
\usepackage{setspace}                     
\usepackage[T1]{fontenc}                  
\usepackage[utf8]{inputenc}               
\usepackage{lmodern} 
\usepackage{graphicx}
\usepackage{natbib}
\usepackage{authblk}
\usepackage{enumerate}

\usepackage{amsmath,amssymb,amsthm,mathtools,bbm}

\numberwithin{equation}{section}

\theoremstyle{plain}
\newtheorem{theorem}{Theorem}[section]
\newtheorem{lemma}[theorem]{Lemma}
\newtheorem{proposition}[theorem]{Proposition}

\theoremstyle{definition}
\newtheorem{definition}{Definition}
\newtheorem{assumption}{Assumption}

\usepackage[ruled, linesnumbered]{algorithm2e}
\usepackage{xcolor}

\usepackage[
  colorlinks,
  linkcolor=blue,
  citecolor=blue,
  urlcolor=blue,
  breaklinks
]{hyperref}

\def\E{\mathbb{E}}
\def\P{\mathbb{P}}
\def\R{\mathbb{R}}
\def\N{\mathbb{N}}
\newcommand{\bx}{\mathbf{x}}
\newcommand{\bt}{\mathbf{t}}
\newcommand{\bX}{\mathbf{X}}
\newcommand{\bxi}{\boldsymbol{\xi}}
\def\Hcal{\mathcal{H}}
\def\Xcal{\mathcal{X}}

\def\Rcal{\mathcal{R}}
\def\Pcal{\mathcal{P}}

\newcommand{\One}[1]{{\mathbbm{1}}\left\{{#1}\right\}}
\newcommand{\one}[1]{{\mathbbm{1}}_{{#1}}}
\def\conch{{\textsc{CONCH}}}
\def\croc{{\textsc{CROC}}}

\DeclareMathOperator*{\argmax}{argmax}
\DeclareMathOperator*{\argmin}{argmin}

\usepackage[nameinlink,noabbrev]{cleveref}
\crefname{assumption}{assumption}{assumptions}

\newcommand{\papertitle}{Distribution-free root cause analysis}
\newcommand{\paperauthorA}{Rohan Hore}
\newcommand{\paperauthorB}{Aaditya Ramdas}
\newcommand{\affilOne}{Department of Statistics and Data Science, Carnegie Mellon University}

\newcommand{\corrEmail}{rhore@andrew.cmu.edu}

\title{\papertitle}

\author{\paperauthorA\thanks{Corresponding author: \corrEmail}
}
\author{\paperauthorB}
\affil{\affilOne}

\date{\today}

\newcommand{\paperabstract}[1]{%
  \begin{abstract}
    #1
  \end{abstract}
}

\usepackage[nameinlink,noabbrev]{cleveref}

\begin{document}
\maketitle
\paperabstract{
We study distribution-free root cause analysis in multi-stream data, where an evolving underlying system is observed through multiple data streams that may each undergo distributionaleach stream may undergo a changes at an unknown timepoints. In such settings, and the goal is to identify the stream exhibitingwith the earliest change provides a natural starting point for investigating the underlying cause, which we refer to as the \emph{, which we call the root-cause index}. Leveraging conformal $p$-values, we propose a novel framework, Conformal Root Cause Analysis (CROC), which constructs finite-sample valid confidence sets for the root-cause index under minimal assumptions: 
the data streams are independent, and
 within each stream the pre- and post-change observations are sampled exchangeably from arbitrary and unknown distributions. We further establish a universality property, showing that any distribution-free method for root cause localization can be represented within the CROC framework. In addition, under mild regularity conditions and principled score design, our method yields asymptotically sharp confidence sets that efficiently isolate the root cause. We further extend CROC to efficiently handle cross-stream dependence when present. Extensive simulations demonstrate accurate localization of the root stream, supporting our theoretical guarantees.
}

\section{Introduction}
In this paper, we study root-cause analysis in multi-stream data, where observations arise from multiple interacting sources evolving over time and undergoing distributional changes. In many applications, these streams serve as downstream monitors of an underlying system that is not directly observed, and changes in the data reflect latent system-level shifts—for instance,  changes in restaurant operations may be reflected in later variations in customer feedback, while policy changes may result in later shifts in financial indicators. Root-cause analysis aims to identify the mechanism driving such changes \citep{wang2023root,banerjee2009framework,li2025coca}. However, modeling the relationship between observed streams and the underlying system typically requires strong structural or domain assumptions, which may be unreliable in practice. 
In most such settings, temporal ordering provides a natural signal: when changes propagate across many streams, the stream with the earliest change offers a principled starting point for identifying the source of the disruption. Accordingly, in this work, we define the \emph{root-cause index} as the stream whose distribution changes before the others. This serves as an actionable proxy for the underlying cause, particularly in systems where effects propagate downstream over time.

Formally, let $K\in\N$ denote the number of data streams, and write $[m]=\{1,\ldots,m\}$ for any $m\in\N$. For each $k\in[K]$, we observe a sequence of $\mathcal{X}_k$-valued random variables $(X_{1,k},\ldots,X_{n,k})$, where the spaces $\mathcal{X}_k$ may differ across streams (e.g., images in one stream and textual descriptions in another). We assume that each stream may undergo a distributional change at an unknown location. Specifically, there exists an unknown vector $\boldsymbol{\xi} := (\xi_1,\ldots,\xi_K) \in [n]^K$ such that, for each $k\in[K]$,
\[
(X_{1,k},\ldots,X_{\xi_k,k}) \sim \Pcal^{(k)}_{0,\xi_k}, 
\qquad 
(X_{\xi_k+1,k},\ldots,X_{n,k}) \sim \Pcal^{(k)}_{1,\xi_k},
\]
where $\Pcal^{(k)}_{0,\xi_k}$ and $\Pcal^{(k)}_{1,\xi_k}$ denote the pre- and post-change distributions, respectively. Here $\xi_k=n$ indicates that no changepoint occurs in stream $k$. We assume that at least one coordinate of $\boldsymbol{\xi}$ is strictly less than $n$, ensuring the presence of a non-trivial change.

In many applications, additional information is available regarding the joint configuration of changepoints across streams. We encode this via a known constraint set $\Rcal \subseteq [n]^K$ with $\boldsymbol{\xi} \in \Rcal$. The set $\Rcal$ captures structural relationships among the changepoints. For instance, $\Rcal = [n]^K$ allows arbitrary changepoint configurations, while $\Rcal = \{(t,\ldots,t): t\in[n]\}$ corresponds to a fully synchronized setting where all streams share a common changepoint.

\subsection{Distribution-free root cause analysis}

Given $\boldsymbol{\xi}\in \Rcal$, we define the root-cause index as $k_\star :=\arg\min_{k\in[K]} \xi_k$
and assume throughout that the minimizer is unique. Our goal is to identify $k_\star$ in a distribution-free manner. To this end, we do not assume knowledge of the data-generating distribution beyond the following condition.

\begin{assumption}\label{assn:exchangeability_multi}
The data streams are mutually independent across $k\in[K]$. Moreover, for each $k\in[K]$, the pre- and post-change distributions are independent and individually exchangeable, i.e., $\Pcal^{(k)}_{0,\xi_k}\perp \Pcal^{(k)}_{1,\xi_k}$, and for any permutations $\pi:[\xi_k]\to[\xi_k]$ and $\pi':[n]\setminus[\xi_k]\to[n]\setminus[\xi_k]$,
\[
    (X_{1,k},\ldots,X_{\xi_k,k}) \overset{d}{=} (X_{\pi(1),k},\ldots,X_{\pi(\xi_k),k}),\,\,\,
    (X_{\xi_k+1,k},\ldots,X_{n,k}) \overset{d}{=} (X_{\pi'(\xi_k+1),k},\ldots,X_{\pi'(n),k}).
\]
\end{assumption}

We write $\bX := (X_{i,k})_{i\in[n],\,k\in[K]}$ for the full data array. For any $k\in[K]$, let $\Hcal_{0,k}$ denote the hypothesis that $k_\star = k$. We write $\P_k$ and $\E_k$ for probability and expectation under $\Hcal_{0,k}$. We also write $2^{[K]}$ denotes the power set of $[K]$. Under this setup, we define a distribution-free confidence set for the root-cause index as follows.

\begin{definition}
Fix $\alpha\in(0,1)$. A mapping $\mathcal{K}_{1-\alpha} : \mathcal{X}^{n\times K} \to 2^{[K]}$ is called a \emph{distribution-free confidence set for the root-cause index} at level $1-\alpha$ if $\P_{k_\star}\bigl(k_\star \in \mathcal{K}_{1-\alpha}(\bX)\bigr) \;\ge\; 1-\alpha$
for all $\Pcal$ satisfying \Cref{assn:exchangeability_multi}.
\end{definition}

We note that \Cref{assn:exchangeability_multi} is quite mild: it requires only independence across streams and segment-wise exchangeability within each stream, without any further knowledge of the pre- and post-change distributions. This stands in contrast to much of the existing literature on root-cause analysis, which typically relies on structural assumptions or domain knowledge \citep{sole2017survey,zhou2004statistical,alaeddini2011using}. 

A natural approach to this problem is to first localize changepoints in each stream and then subsequently identify the root cause. However, even the simpler task of changepoint localization remains underexplored in a fully distribution-free setting. Broadly, existing approaches of changepoint localization fall into three categories. Parametric methods \citep{kim1989likelihood,saha2026post} often require a separation between pre- and post-change distributions. Nonparametric methods \citep{frick2014multiscale,verzelen2023optimal} typically provide only asymptotic guarantees and may involve difficult-to-calibrate constants. Resampling-based approaches \citep{cho2022bootstrap} are widely used in practice but generally lack rigorous finite-sample guarantees. 

More recently, \citep{dandapanthula2025offline,hore2026conformal} take initial steps toward distribution-free changepoint localization, and these developments serve as key building blocks for our framework.

\subsection{Our contributions}
\label{sec:contributions}

In this paper, we develop a general conformal framework for distribution-free root-cause analysis.
\begin{itemize}
    \item  Building on Conformal Changepoint Localization (\conch{}) \citep{hore2026conformal}, we introduce Conformal Root Cause Analysis (\croc{}), a method that uses conformal $p$-values to construct confidence sets for the root-cause index $k_\star \in [K]$ in multi-stream data.

    \item  We establish exact finite-sample coverage: for any $\alpha\in(0,1)$, the resulting confidence sets contain $k_\star$ with probability at least $1-\alpha$ under minimal exchangeability assumptions (cf.\ \Cref{assn:exchangeability_multi}).

    \item  We prove a universality property: any distribution-free method for root-cause localization can be represented as an instance of \croc{} with an appropriate score function.

    \item  We analyze the role of the score function, characterize an optimal choice, and show that suitable approximations of the optimal score yield asymptotically sharp confidence sets that isolate the true root index.

    \item  We extend \croc{} to handle cross-stream dependence via an aggregation-based construction. lastly, we provide simulation studies demonstrating the effectiveness of \croc{} across a range of settings, showing accurate and sharp localization of the root-cause index in practice.
\end{itemize}

\paragraph{Organization of the paper}
The rest of the paper is organized as follows. In Section~\ref{sec:method}, we introduce the \croc{} framework. Section~\ref{sec:validity} establishes its finite-sample validity, and Section~\ref{sec:universality} proves its universality. In Section~\ref{sec:CPPscore}, we study the asymptotic sharpness of the resulting confidence sets. Section~\ref{sec:croc_dependence} extends the framework to handle cross-stream dependence. Section~\ref{sec:experiments} presents simulation experiments, and Section~\ref{sec:discussion} concludes with a brief discussion.
\section{Methodology}\label{sec:method}

In this section, we develop a general framework for distribution-free root cause analysis, based on conformal $p$-value machinery. Since introduction in \cite{vovk1999machine}, conformal $p$-values have been extensively employed in several other applications \citep{bates2023testing,wu2024conditional,jin2023selection}. Apart from them, the conformal changepoint localization (\conch{}), introduced in \cite{hore2026conformal}, provides a distribution-free changepoint localization in the single-stream setting. Motivated by the \conch{} procedure, we introduce a novel method for root cause analysis, which we refer to as \emph{Conformal Root Cause Analysis} (CROC).

\subsection{Background: changepoint localization in single data-stream}\label{sec:conch_single}

We begin by reviewing the \conch{} procedure, which provides a distribution-free confidence set for the true changepoint when there is a single data stream. Beyond serving as background, the purpose of this section is to also highlight the key principle that will guide the construction of \croc.

To that end, consider the case $K=1$, where we observe a sequence $\bX:=(X_1,\ldots,X_n)\in \mathcal{X}^n$ with an unknown changepoint $\xi\in[n-1]$, such that \Cref{assn:exchangeability_multi} holds.
The implementation of \conch{} requires specifying two key components:

\begin{enumerate}[(i)]
    \item \textbf{ChangePoint Plausibility (CPP) score.} This is any mapping $S:\mathcal{X}^n \to \R^{n-1}$. While there is no restriction on the choice of $S$ to implement the algorithm, the guiding philosophy is that the $t$-th coordinate $S_t$ should quantify the plausibility of $t$ being a changepoint, with larger values indicating stronger evidence.

    \item \textbf{Split-permutation group.} For each $t\in[n-1]$, define
    \[
    \Pi_t := \Bigl\{ \pi \in \mathcal{S}_n : \pi(i) \le t \ \text{for all } i \le t,\ \ \pi(i) > t \ \text{for all } i > t \Bigr\},
    \]
    i.e., the set of permutations that act separately on the pre- and post-change segments.
\end{enumerate}

Let $\Hcal'_{0,t}$ denote the hypothesis that $\xi=t$. Under \Cref{assn:exchangeability_multi}, if $t$ is the true changepoint, then permutations in $\Pi_t$ preserve the pre- and post-change exchangeability, and hence leave the distribution of $S_t(\bX)$ invariant under $\Hcal'_{0,t}$. 
This invariance then leads to the conformal $p$-value
\begin{equation}\label{eq:pvalue_conch}
    p_t := \frac{1}{|\Pi_t|}\sum_{\pi\in\Pi_t} \One{ S_t(\pi(\bX)) \le S_t(\bX)},
\end{equation}
which is super-uniform under $\Hcal'_{0,t}$. Finally, the corresponding confidence set is obtained by thresholding these $p$-values appropriately, and defining $\mathcal{C}^{\mathrm{CONCH}}_{1-\alpha} := \{t\in[n-1]: p_t > \alpha\}$.

This construction is not only distribution-free valid, but also admits a natural interpretation: building a confidence set for $\xi$ is equivalent to testing the hypotheses $\Hcal'_{0,t}$ for all candidate changepoint locations $t\in[n-1]$. 

\subsection{Conformal root cause analysis}\label{sec:croc_defn}

\IncMargin{1.2em} 
\begin{algorithm}[t]
    \caption{\croc{}: conformal root cause analysis}
    \label{alg:croc}
    \KwIn{$\bX = (X_{i,k})_{i\in[n],\,k\in[K]}$ (data), $1-\alpha$ (target coverage), $S$ (CPP score), $\Rcal$ (constraint set)}
    \KwOut{$\mathcal{K}^{\mathrm{CROC}}_{1-\alpha}$ (confidence set for root-cause index)}

    \For{$\bt \in \Rcal$}{
        $\Pi_{\bt} \gets \{(\pi_1,\ldots,\pi_K): \pi_k \text{ permutes within } [1,t_k] \text{ and } [t_k+1,n]\}$\;
        \ForEach{$\pi \in \Pi_{\bt}$}{Evaluate $S(\pi(\bX),\bt)$\;}
        $p_{\bt} \gets \frac{1}{|\Pi_{\bt}|}\sum_{\pi\in \Pi_{\bt}} \One{S(\pi(\bX),\bt)\le S(\bX,\bt)}$\;
    }

    \For{$k \in [K]$}{
        $I_k \gets \{\bt \in \Rcal : t_k < t_j \text{ for all } j\neq k\}$\;
        $p_{(k)} \gets \max_{\bt \in I_k} p_{\bt}$\;
    }

    $\mathcal{K}^{\mathrm{CROC}}_{1-\alpha} \gets \{k \in [K] : p_{(k)} > \alpha\}$\;
    \Return{$\mathcal{K}^{\mathrm{CROC}}_{1-\alpha}$}
\end{algorithm}
\DecMargin{1.2em}

We now extend the correspondence between hypothesis testing and confidence sets to the multi-stream setting for root-cause analysis. Building on this philosophy, for this task, it suffices to test the null hypotheses $\Hcal_{0,k}$ for each $k\in [K]$.

Under the null $\Hcal_{0,k}$, the true changepoint configuration $\bxi$ must lie in the set $I_k$, where we define
\[
I_k := \{(t_1,\ldots,t_K)\in \Rcal : t_k < t_j \ \text{for all } j\neq k\}.
\]
Equivalently, we may write $\Hcal_{0,k} = \bigcup_{\bt\in I_k} \Hcal'_{0,\bt}$,
where $\Hcal'_{0,\bt}$ denotes the hypothesis that the changepoint configuration $\bxi$ equals $\bt$. This representation suggests that it suffices to first construct distribution-free tests for $\Hcal'_{0,\bt}$ for each $\bt\in \Rcal$.

To this end, we extend the \conch{} framework to the multi-stream setting, retaining the same terminology for its key components.
\begin{enumerate}[(i)]
    \item \textbf{Changepoint plausibility (CPP) score:} any mapping $S:(\prod_{k=1}^K \Xcal_k^n)\times \Rcal \to \R$,
    where $S(\bX,\bt)$ is chosen such that it quantifies the plausibility of $\bt$ being the true changepoint vector.

    \item \textbf{Split-permutation group:} for each $\bt\in\Rcal$, we define $\Pi_{\bt}$ as the set of permutations $\pi=(\pi_1,\ldots,\pi_K)\in (\mathcal{S}_n)^K$ such that, for each $k\in[K]$,
    \[
    \pi_k(i)\le t_k \ \text{for } i\le t_k,~~\text{and} \quad 
    \pi_k(i)>t_k \ \text{for } i>t_k.
    \]
    That is, $k$-th stream is permuted independently within the segments to the left and right of $t_k$.
\end{enumerate}

If $\bt$ is the true changepoint vector, then permutations in $\Pi_{\bt}$ preserve the joint distribution of $\bX$ under $\Hcal'_{0,\bt}$, which yields the conformal $p$-value
\begin{equation}\label{eq:pval_croc}
p_{\bt} := \frac{1}{|\Pi_{\bt}|} \sum_{\pi\in \Pi_{\bt}} \One{ S(\pi(\bX), \bt) \le S(\bX,\bt)}.
\end{equation}
Finally, to test $\Hcal_{0,k}$, we aggregate these $p$-values over all $\bt\in I_k$ via max-aggregation:
\begin{equation}\label{eq:pval_croc_agg}
p_{(k)} := \max_{\bt\in I_k} p_{\bt}.
\end{equation}
We then construct the \croc{} confidence set by thresholding these aggregated $p$-values:
\[
\mathcal{K}^{\mathrm{CROC}}_{1-\alpha} := \{k\in[K]: p_{(k)} > \alpha\}.
\]
The resulting procedure is summarized in \Cref{alg:croc}.
In the next section, we formally show that $\mathcal{K}^{\mathrm{CROC}}_{1-\alpha}$ is a valid confidence set for $k_\star$.

\section{Finite-sample validity of \croc{}}\label{sec:validity}

In this section, we establish the finite-sample coverage of $\mathcal{K}^{\mathrm{CROC}}_{1-\alpha}$. Recall that $\Hcal'_{0,\bt}$ denotes distributions satisfying \Cref{assn:exchangeability_multi} with changepoint configuration $\bt$, while $\Hcal_{0,k}$ denotes those satisfying \Cref{assn:exchangeability_multi} with root index $k$.

\begin{theorem}[Validity]\label{thm:validity_croc}
Fix $\alpha\in(0,1)$. For each $\bt\in \Rcal$, the $p$-value $p_{\bt}$ in \eqref{eq:pval_croc} satisfies $\mathbb{P}_{\Hcal'_{0,\bt}}(p_{\bt} \le \alpha) \le \alpha$. Thus, $\mathbb{P}_{k}(p_{(k)} \le \alpha) \le \alpha$, and $\mathcal{K}^{\mathrm{CROC}}_{1-\alpha}$ is a distribution-free confidence set for $k_\star$.
\end{theorem}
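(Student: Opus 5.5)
The argument has three parts. First, $p_{\bt}$ is super-uniform under $\Hcal'_{0,\bt}$, via a standard group-invariance (permutation test) argument. Second, under $\Hcal_{0,k}$ this passes to the max-aggregated $p_{(k)}$ because the true configuration lies in $I_k$. Third, coverage follows by thresholding.

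\textbf{Step 1: group structure and invariance.} Fix $\bt\in\Rcal$. The set $\Pi_{\bt}$ is a finite group under coordinatewise composition, since each $\Pi_{t_k}$ is the product $\mathcal{S}_{t_k}\times\mathcal{S}_{n-t_k}$ of symmetric groups. Under $\Hcal'_{0,\bt}$ we have $\bxi=\bt$. By \Cref{assn:exchangeability_multi}, the streams are mutually independent. Within stream $k$, the pre- and post-change blocks are independent, and each block is exchangeable. Hence, for every $\sigma=(\sigma_1,\dots,\sigma_K)\in\Pi_{\bt}$, the law of $\sigma_k$ applied to stream $k$ equals that of stream $k$. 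Independence across streams then gives $\sigma(\bX)\overset{d}{=}\bX$.

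\textbf{Step 2: super-uniformity of $p_{\bt}$.} Write $T_\pi:=S(\pi(\bX),\bt)$. Draw $\sigma$ uniformly from $\Pi_{\bt}$, independently of $\bX$. By Step 1, $(T_\pi)_{\pi}$ computed from $\sigma(\bX)$ has the same law as computed from $\bX$. Because $\pi\mapsto\pi\circ\sigma$ is a bijection of the group, the multiset $\{T_{\pi}(\sigma(\bX))\}_{\pi}$ equals $\{T_\pi(\bX)\}_\pi$. Hence $p_{\bt}\overset{d}{=}\frac{1}{|\Pi_{\bt}|}\sum_{\pi}\One{T_\pi\le T_\sigma}$, where $\sigma$ is a uniform random index into the orbit values. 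Conditional on $\bX$, this is the normalized rank of a uniformly chosen element among $|\Pi_{\bt}|$ values, counted with ties included. The standard counting bound gives
\[
\P\Bigl(\tfrac{1}{N}\#\{\pi:T_\pi\le T_\sigma\}\le\alpha\Bigr)\le \frac{\lfloor \alpha N\rfloor}{N}\le\alpha,
\qquad N=|\Pi_{\bt}|.
\]
Ties only increase the rank, which is the direction we need. I expect the main care to be in this step: making the distributional identity between $p_{\bt}$ and the randomized rank precise (the group-closure argument), and handling ties correctly.

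\textbf{Step 3: aggregation and coverage.} Under $\Hcal_{0,k}$ the true configuration satisfies $\bxi\in\Rcal$ and $\xi_k<\xi_j$ for all $j\ne k$, so $\bxi\in I_k$. Since $p_{(k)}\ge p_{\bxi}$, we get $\{p_{(k)}\le\alpha\}\subseteq\{p_{\bxi}\le\alpha\}$. The data law lies in $\Hcal'_{0,\bxi}$, so Step 2 gives $\P_k(p_{(k)}\le\alpha)\le\alpha$; no multiplicity correction is needed because only the single true $\bxi$ matters. Finally, $\P_{k_\star}(k_\star\notin\mathcal{K}^{\mathrm{CROC}}_{1-\alpha})=\P_{k_\star}(p_{(k_\star)}\le\alpha)\le\alpha$ for every distribution satisfying \Cref{assn:exchangeability_multi}. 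This is exactly the definition of a distribution-free confidence set.
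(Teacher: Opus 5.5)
Your proposal is correct and follows essentially the same route as the paper. Both arguments use invariance of the law of $\bX$ under $\Pi_{\bt}$, then the group-closure identity to rewrite $p_{\bt}$ as the normalized rank of a uniformly chosen orbit element (the paper averages over $\pi\in\Pi_{\bt}$ where you draw $\sigma$ uniformly), then a deterministic counting bound, and finally $\bxi\in I_k$ to pass to $p_{(k)}$; your explicit $\lfloor\alpha N\rfloor/N$ count with ties just spells out the step the paper calls a ``deterministic inequality.''
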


This is a distribution-free result, meaning that the finite-sample validity of $\mathcal{K}^{\mathrm{CROC}}_{1-\alpha}$ holds for any distribution $\Pcal$ satisfying \Cref{assn:exchangeability_multi}. Further, this validity holds for any CPP score function $S$, and thus it offers great flexibility in the choice of CPP score for \croc; in particular, one may learn the score from the observed data $\bX$ itself. 

Note that the $p$-values defined in~\eqref{eq:pval_croc} require evaluation of $S(\cdot,\bt)$ for all permutations $\pi\in \Pi_{\bt}$. However, as the sample size in each stream, $n$, increases, even for a moderate number of streams, $K$, the size of $\Pi_{\bt}$ becomes large, and computing the $p$-value becomes expensive. A natural remedy is to use a Monte Carlo approximation, i.e., sample $\pi^{(1)},\ldots,\pi^{(M)} \stackrel{\text{i.i.d.}}{\sim} \text{Unif}(\Pi_{\bt})$, and then calculate:
\begin{equation}\label{eq:pvalue_croc_MC}
    \tilde{p}_{\bt} := \frac{1 + \sum_{m=1}^M \One{ S(\pi^{(m)}(\bX),\bt) \leq S(\bX,\bt)}}{1+M}.
\end{equation}  
We then define $\tilde{p}_{(k)}:= \max_{\bt\in I_k} \tilde{p}_{\bt}$. Importantly, for any permutation in $\Pi_{\bt}$, under the null $\Hcal'_{0,\bt}$, the joint distribution of $\bX$ is preserved. Hence, even with this randomized construction, $\tilde{p}_{\bt}$ is a valid $p$-value under $\Hcal'_{0,\bt}$. Consequently, the randomized confidence set $\tilde{\mathcal{K}}^{\mathrm{CROC}}_{1-\alpha} := \{k\in[K]: \tilde{p}_{(k)} > \alpha\}$ is a valid confidence set for $k_\star$. We record this in the following theorem.

\begin{theorem}\label{thm:validity_croc_randz}
Fix $\alpha\in(0,1)$. For each $\bt\in \Rcal$, $\mathbb{P}_{\Hcal'_{0,\bt}}(\tilde{p}_{\bt} \le \alpha) \le \alpha$. Thus, $\mathbb{P}_{k}(\tilde{p}_{(k)} \le \alpha) \le \alpha$, and $\tilde{\mathcal{K}}^{\mathrm{CROC}}_{1-\alpha}$ is a distribution-free confidence set for $k_\star$.
\end{theorem}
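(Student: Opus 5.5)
The plan is to prove the Monte Carlo statement by a standard exchangeability argument over the augmented collection of $M+1$ permutations, and then pass from $\tilde p_{\bt}$ to $\tilde p_{(k)}$ and to coverage exactly as in \Cref{thm:validity_croc}.

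\textbf{Step 1: invariance under $\Pi_{\bt}$.} Fix $\bt\in\Rcal$ and work under $\Hcal'_{0,\bt}$. Under \Cref{assn:exchangeability_multi}, the streams are independent. Within stream $k$, the pre-change block $(X_{1,k},\ldots,X_{t_k,k})$ and the post-change block are independent and each exchangeable. Hence for any fixed $\sigma=(\sigma_1,\ldots,\sigma_K)\in\Pi_{\bt}$ we have $\sigma(\bX)\overset{d}{=}\bX$: each $\sigma_k$ acts separately on the two blocks of stream $k$, and independence lets us combine the blockwise equalities in distribution into a joint one.

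\textbf{Step 2: exchangeability of the augmented scores.} Let $\pi^{(0)}$ be uniform on $\Pi_{\bt}$, independent of $\bX$ and of $\pi^{(1)},\ldots,\pi^{(M)}$, and set $\mathbf{Y}:=\pi^{(0)}(\bX)$.
\begin{itemize}
\item By Step 1 and independence, $\mathbf{Y}\overset{d}{=}\bX$.
\item Because $\Pi_{\bt}$ is a group, the elements $\sigma^{(m)}:=\pi^{(m)}\circ(\pi^{(0)})^{-1}$, for $m=0,\ldots,M$, are i.i.d.\ uniform on $\Pi_{\bt}$ and independent of $\mathbf{Y}$. This follows by conditioning on $\pi^{(0)}$ and using left/right invariance of the uniform measure.
\item Moreover $\pi^{(m)}(\bX)=\sigma^{(m)}(\mathbf{Y})$ for every $m$, with $\sigma^{(0)}=\mathrm{id}$.
\end{itemize}
Now compare $\{\bX,\pi^{(1)}(\bX),\ldots,\pi^{(M)}(\bX)\}$ with $\{\mathbf{Y},\sigma^{(1)}(\mathbf{Y}),\ldots\}$; replacing $\bX$ by $\mathbf{Y}$ does not change the law. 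The cleaner statement is that, conditional on the orbit of $\mathbf{Y}$ under $\Pi_{\bt}$, the $M+1$ points $\pi^{(m)}(\bX)$, $m=0,\ldots,M$, are i.i.d.\ uniform on that orbit, where $\pi^{(0)}(\bX)$ plays the role of the observed point. Therefore the scores $Z_m:=S(\pi^{(m)}(\bX),\bt)$ are exchangeable, and $\tilde p_{\bt}$ has the same law as the normalized rank of $Z_0$ among $(Z_0,\ldots,Z_M)$.

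\textbf{Step 3: super-uniformity of the rank.} For exchangeable $Z_0,\ldots,Z_M$, the quantity $\frac{1}{M+1}\sum_{m=0}^M\One{Z_m\le Z_0}$ satisfies $\P(\cdot\le\alpha)\le\alpha$. This is the standard argument: ties only increase the rank, and averaging the indicator over which index is distinguished gives at most $\lfloor\alpha(M+1)\rfloor$ indices with rank $\le\alpha$. This yields $\P_{\Hcal'_{0,\bt}}(\tilde p_{\bt}\le\alpha)\le\alpha$.

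\textbf{Step 4: aggregation and coverage.} Under $\Hcal_{0,k}$ the true configuration $\bxi$ lies in $I_k$, so $\tilde p_{(k)}\ge\tilde p_{\bxi}$. Hence
\[
\P_k(\tilde p_{(k)}\le\alpha)\le\P(\tilde p_{\bxi}\le\alpha)\le\alpha,
\]
and coverage of $\tilde{\mathcal K}^{\mathrm{CROC}}_{1-\alpha}$ follows with $k=k_\star$.

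\textbf{Main obstacle.} The delicate step is Step 2: justifying the exchangeability of the augmented scores with the group/orbit argument, in particular handling the distribution shift from $\bX$ to $\pi^{(0)}(\bX)$. The group structure of $\Pi_{\bt}$ is essential there.
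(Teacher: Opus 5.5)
Your proof is correct and follows essentially the same route as the paper: you adjoin an extra uniform draw $\pi^{(0)}$ and use the group structure of $\Pi_{\bt}$ together with the null invariance $\pi^{(0)}(\bX)\overset{d}{=}\bX$. This identifies $\tilde p_{\bt}$ in law with the normalized rank of $S(\pi^{(0)}(\bX),\bt)$ among the conditionally i.i.d.\ scores $S(\pi^{(m)}(\bX),\bt)$, $m=0,\ldots,M$, and then $\bxi\in I_k$ handles the max-aggregation. Your reindexing $\sigma^{(m)}=\pi^{(m)}\circ(\pi^{(0)})^{-1}$ is cleaner bookkeeping than the paper's $\pi_0\circ\pi_m$; just note that only $\sigma^{(1)},\ldots,\sigma^{(M)}$ are i.i.d.\ uniform, not $\sigma^{(0)}=\mathrm{id}$.
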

\section{Universality of the \croc{} algorithm}\label{sec:universality}

While, through the choice of CPP score $S$, the \croc{} framework captures a broad class of methods for constructing distribution-free confidence sets for $k_\star$, a natural question is whether any alternative distribution-free procedures for root-cause analysis may exist. In this section, we give a definitive answer: any distribution-free procedure for root-cause analysis must be an instance of the \croc{} framework with an appropriate choice of CPP score.

\begin{theorem}[Universality of \croc]
\label{thm:universality_croc}
Fix $\alpha\in(0,1)$. Let $C$ be any procedure that maps $\bX \in \prod_{k=1}^K \Xcal_k^n$ to a set $C(\bX) \subseteq \Rcal$ such that $\P_{k}(k\in C(\bX))\geq 1-\alpha$. Then $C$ coincides exactly with the set $\mathcal{K}^{\mathrm{CROC}}_{1-\alpha}$ constructed using the CPP score $S(\bx,\mathbf{t}) = \mathbbm{1}\bigl\{\argmin_{k\in [K]}t_k\in C(\bx)\bigr\}$.
\end{theorem}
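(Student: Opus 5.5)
The plan is to show that, with the score $S(\bx,\bt)=\One{\argmin_k t_k\in C(\bx)}$, the aggregated $p$-value satisfies $p_{(k)}>\alpha$ if and only if $k\in C(\bX)$. (Here $C(\bX)$ is read as a subset of $[K]$, as the coverage condition requires.) Fix $k$ and $\bt\in I_k$, so that $\argmin_j t_j=k$. Then $S(\bx,\bt)=\One{k\in C(\bx)}$ depends on $\bt$ only through $k$. This gives
\[
p_{\bt}=\frac{1}{|\Pi_{\bt}|}\sum_{\pi\in\Pi_{\bt}}\One{\One{k\in C(\pi(\bX))}\le \One{k\in C(\bX)}}.
\]

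\textbf{Case $k\in C(\bX)$.} The right side of the inner indicator equals $1$, so every term is $1$. Hence $p_{\bt}=1$ for every $\bt\in I_k$, so $p_{(k)}=1>\alpha$ and $k\in\mathcal{K}^{\mathrm{CROC}}_{1-\alpha}$, provided $I_k\neq\emptyset$.

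\textbf{Case $k\notin C(\bX)$.} Now $p_{\bt}$ equals the fraction of $\pi\in\Pi_{\bt}$ with $k\notin C(\pi(\bX))$. To conclude I need $p_{\bt}\le\alpha$ for every $\bt\in I_k$. This is the main obstacle, because coverage of $C$ is a statement about probabilities, while the desired bound concerns a deterministic orbit average at the observed $\bX$.

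My first attempt uses the permutation-invariant null laws. Take $\bX$ with an arbitrary distribution $Q$. Draw $\pi$ uniformly from $\Pi_{\bt}$ and set $\bX'=\pi(\bX)$. The law of $\bX'$ is invariant under $\Pi_{\bt}$, so it satisfies \Cref{assn:exchangeability_multi} with configuration $\bt$, and hence lies in $\Hcal_{0,k}$. Coverage then gives $\E_Q\bigl[\frac{1}{|\Pi_{\bt}|}\sum_{\pi}\One{k\notin C(\pi(\bX))}\bigr]\le\alpha$. Choosing $Q$ to be a point mass at the observed array gives $p_{\bt}\le\alpha$ whenever $k\notin C(\bX)$. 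I also need to check that the point-mass-symmetrized law is admissible: its pre- and post-change blocks within a stream must be independent, and the streams must be independent. Uniform $\pi$ over the product group makes the streams and the blocks independent, since the orbit factorizes. So the only thing to verify is that exchangeable laws with arbitrary (possibly degenerate) supports are allowed.

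If a strict inequality issue arises, for example $p_{\bt}$ landing exactly on the boundary, I will adopt the reading that $C$ is compared through the threshold ``$>\alpha$''. I will also note that the orbit average is at most $\alpha$, so $p_{\bt}\le\alpha$ for all $\bt\in I_k$. Taking the maximum gives $p_{(k)}\le\alpha$, so $k\notin\mathcal{K}^{\mathrm{CROC}}_{1-\alpha}$.

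Finally, I will address the edge case $I_k=\emptyset$. In that case $\Hcal_{0,k}$ is vacuous, $p_{(k)}$ is a maximum over the empty set, and by convention $k$ is excluded. I will either state this convention or restrict attention to $k$ with $I_k\neq\emptyset$. Combining the two cases shows that the two sets coincide.
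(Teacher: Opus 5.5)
Your proposal is correct and follows essentially the same route as the paper. The inclusion $C(\bX)\subseteq\mathcal{K}^{\mathrm{CROC}}_{1-\alpha}$ comes from $p_{\bt}=1$. The reverse inclusion comes from applying the coverage guarantee to the law of $\pi(\bx)$, with $\bx$ fixed and $\pi\sim\mathrm{Unif}(\Pi_{\bt})$; this is the paper's ``conditional on the multisets'' claim, and it bounds the orbit average, hence $p_{\bt}$, by $\alpha$.

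Two small caveats. Your intermediate claim that symmetrizing an arbitrary $Q$ satisfies the assumption is false, since cross-stream and cross-block dependence in $Q$ can survive. This does not affect the argument, because the point-mass case you actually use is valid. Your flag on $I_k=\emptyset$ is a fair refinement that the paper glosses over. However, exact equality there requires restricting to $k$ with $I_k\neq\emptyset$, not merely adopting an exclusion convention, since $C$ may still contain such $k$.
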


Therefore, \croc{} captures the entire class of distribution-free root-cause analysis methods. Beyond its conceptual significance, it naturally provides a principled way to calibrate any existing root-cause localization method that may be only valid asymptotically or under stringent model assumptions. As mentioned before, a common approach is to first perform changepoint localization in each data stream, and then develop a method for root-cause analysis. Such approaches, often built under parametric model assumptions, are very sensitive to model misspecification. Therefore, we may wrap such existing methods within the \croc{} framework via a CPP score designed from these localization methods, and restore validity. If the original procedure $C$ is already distribution-free valid, this calibration step leaves it unchanged and preserves its structural properties; otherwise, it refines $C$ into a distribution-free valid confidence set.
\section{Role of CPP score}\label{sec:CPPscore}

While validity holds for any choice of CPP score, an equally important consideration in practice is the sharpness of the resulting \croc{} confidence sets. In particular, appropriately tailored score functions can lead to substantially narrower confidence sets. To start with, the following proposition sets up some basic guidelines regarding the choice od $S$.

\begin{proposition}\label{prop:score-properties} Fix $n\in\mathbb{N}$ and $\alpha\in(0,1)$. 
\begin{enumerate}[(i)] 
\item \textnormal{\textbf{(Symmetry yields power loss).}} Fix $t\in \Rcal$. If $S$ satisfies $S(\cdot,\bt)=S(\pi(\cdot),\bt)$ for all $\pi\in \Pi_t$, then the $p$-value $p_t$ in \eqref{eq:pval_croc} equals $1$, and thus, $\mathbb{P}(\argmin_{k\in [K]}t_k \in \mathcal{K}^{\mathrm{CROC}}_{1-\alpha})=1$. 
\item \textbf{(Conformal data-processing inequality).} Let $C_1$ be the \croc{} set based on score $S$. For any non-decreasing $f:\mathbb{R}\to\mathbb{R}$, let $C_2$ be the corresponding set based on $f(S)$. Then $C_1\subseteq C_2$.
\end{enumerate} \end{proposition}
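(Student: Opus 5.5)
Both parts follow directly from the definition of $p_{\bt}$ in \eqref{eq:pval_croc} and the max-aggregation in \eqref{eq:pval_croc_agg}. No probabilistic argument is needed; everything holds pointwise in $\bX$.

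For part (i), fix $\bt\in\Rcal$ and suppose $S(\pi(\bx),\bt)=S(\bx,\bt)$ for every $\pi\in\Pi_{\bt}$ and every $\bx$.
\begin{itemize}
\item Every indicator $\One{S(\pi(\bX),\bt)\le S(\bX,\bt)}$ in \eqref{eq:pval_croc} equals $1$, so $p_{\bt}=1$ deterministically.
\item Let $k=\argmin_{j}t_j$, which is unique whenever $\bt$ lies in some $I_k$. Then $\bt\in I_k$, so $p_{(k)}=\max_{\bt'\in I_k}p_{\bt'}\ge p_{\bt}=1>\alpha$.
\item Hence $k\in\mathcal{K}^{\mathrm{CROC}}_{1-\alpha}$ surely, which gives probability one under any distribution.
\end{itemize}
The only care needed is the caveat that $\bt$ must belong to some $I_k$, i.e.\ its minimum coordinate is unique. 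Otherwise $\argmin$ is ill-defined and $\bt$ never enters any aggregated $p$-value. I would state this implicitly, matching the paper's standing uniqueness assumption.

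For part (ii), fix $\bx$ and $\bt$, and write $a_\pi=S(\pi(\bx),\bt)$ and $a=S(\bx,\bt)$.
\begin{itemize}
\item Since $f$ is non-decreasing, $a_\pi\le a$ implies $f(a_\pi)\le f(a)$. So $\One{a_\pi\le a}\le\One{f(a_\pi)\le f(a)}$ for every $\pi\in\Pi_{\bt}$.
\item Averaging over $\Pi_{\bt}$ gives $p_{\bt}^{S}\le p_{\bt}^{f(S)}$ for every $\bt$.
\item Taking the maximum over $I_k$ preserves the inequality, so $p_{(k)}^{S}\le p_{(k)}^{f(S)}$.
\item Therefore $p_{(k)}^{S}>\alpha$ implies $p_{(k)}^{f(S)}>\alpha$, which is exactly $C_1\subseteq C_2$.
\end{itemize}
The same monotonicity argument applies verbatim to the Monte Carlo $p$-values $\tilde p_{\bt}$, provided the same permutation draws are used for both scores.

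There is no real obstacle here. The only subtle point is that monotonicity runs in one direction only: a non-strictly increasing $f$ can create ties, so the reverse inclusion fails in general. This is why the statement is an inclusion rather than an equality, and why this direction is the one claimed.
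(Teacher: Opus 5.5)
Your proposal is correct and follows the paper's proof essentially line for line. Part (i) rests on every indicator in \eqref{eq:pval_croc} equaling $1$, and part (ii) averages the pointwise implication $S(\pi(\bX),\bt)\le S(\bX,\bt)\Rightarrow f(S(\pi(\bX),\bt))\le f(S(\bX,\bt))$ over $\Pi_{\bt}$; you additionally spell out the max-aggregation over $I_k$ and the uniqueness of $\argmin_k t_k$, which the paper leaves implicit.
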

To put in words, the first part suggests that using the observed data $\bX$ in symmetric way yields trivial $p$-values and hence leads to large \croc{} confidence sets. Such scores should be avoided in practice. Second part is a monotonicity property of \croc{}: applying any non-decreasing transformation to the CPP score can only enlarge the resulting set, and any \emph{strictly} increasing transformation leaves it unchanged. 

\subsection{Optimal CPP score}
In order to characterize an expression for the optimal choice of score, we consider a canonical setting, namely the i.i.d.\ changepoint model. Suppose $\Pcal^{(k)}_{0,\xi_k}=(\Pcal_{0,k})^{\xi_k}$ and $\Pcal^{(k)}_{1,\xi_k}=(\Pcal_{1,k})^{n-\xi_k}$, where $\Pcal_{0,k}$ and $\Pcal_{1,k}$ admit densities $f_{0,k}$ and $f_{1,k}$ with respect to a common dominating measure $\mu$. That is, within each stream $k$, pre-change observations are i.i.d.\ from $f_{0,k}$ and post-change observations are i.i.d.\ from $f_{1,k}$. Let $\Pcal_{\mathrm{IID}}$ denote the class of distributions consistent with this model and satisfying \Cref{assn:exchangeability_multi}. 

Now, we observe that the task of building a narrow \croc{} confidence set is tied to the task of constructing tests that have greater power against the hypothesis $\{\Hcal_{0,k}\}_{k\neq k_\star}$, or equivalently constructing tests that have greater power against the hypothesis $\Hcal'_{0,\bt}$ for any $\bt\neq \bxi$.
Hence, the optimal choice of CPP score can be viewed through the lens of testing the null $\Hcal_{0,\boldsymbol{\xi}}$ against alternative $\Hcal_{0,\mathbf{t}}$, with the goal of maximizing power. That said, within the framework of \croc{}, we are still limited to this testing problem only via conformal $p$-values. However, a non-trivial formulation of the testing problem conditional on suitable multisets, and by applying the classical Neyman--Pearson Lemma, we get the following result.

\begin{theorem}\label{thm:optimal_score}
Any strictly increasing transformation of the CPP score $S^{\mathrm{OPT}}$ defined as
\begin{equation}\label{eq:optimal_CPP_score}
S^{\mathrm{OPT}}(\bx,\mathbf{t})
=\sum_{k=1}^K \log\left(
\frac{\prod_{i\le t_k} f_{0,k}(x_{i,k})\prod_{i> t_k} f_{1,k}(x_{i,k})}
{\prod_{i\le \xi_k} f_{0,k}(x_{i,k})\prod_{i> \xi_k} f_{1,k}(x_{i,k})}
\right).
\end{equation}
gives the optimal 
minimizes the expected size of the \croc{} confidence set. In particular, for any $k\in [K]$ and any CPP score $S:(\prod_{k=1}^K \Xcal_k^n)\times \Rcal \to \R$, it holds that $\E_{\Hcal_{0,k}\,\cap\,\Pcal_{\mathrm{IID}}}\!\big[\,|\mathcal{K}^{\mathrm{CROC}}_{1-\alpha}(S)|\,\big]
\;\ge\;
\E_{\Hcal_{0,k}\,\cap\,\Pcal_{\mathrm{IID}}}\!\big[\,|\mathcal{K}^{\mathrm{CROC}}_{1-\alpha}(S^{\mathrm{OPT}})|\,\big]$.
\end{theorem}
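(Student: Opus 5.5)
Since $|\mathcal{K}^{\mathrm{CROC}}_{1-\alpha}(S)|=\sum_{j\in[K]}\One{p_{(j)}>\alpha}$, linearity of expectation reduces the claim to a statement for each candidate index separately. Fix the true configuration $\bxi$ with $k_\star=k$. For $j=k$ the coverage term is at most one, so the real content lies in the indices $j\neq k$. For these it suffices to show that, for each $\bt\in I_j$, the probability under $\bxi$ that $p_{\bt}>\alpha$ is minimized by $S^{\mathrm{OPT}}$. The term for $j=k$, with $\bt=\bxi$, is handled by the same comparison or is simply bounded, since validity from \Cref{thm:validity_croc} caps the possible gain there.

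\textbf{Conditioning on split multisets.} Fix $\bt\neq\bxi$. Condition on the $\sigma$-field $\mathcal{G}_{\bt}$ generated by the unordered multisets $\{X_{i,k}:i\le t_k\}$ and $\{X_{i,k}:i>t_k\}$ for every stream $k$; this is exactly the orbit of $\bX$ under $\Pi_{\bt}$. Given $\mathcal{G}_{\bt}$, the $p$-value $p_{\bt}$ is a function of which element $\pi(\bX)$ of the orbit was observed. Its rejection rule $\{p_{\bt}\le\alpha\}$ is a test of level exactly $\le\alpha$ under the conditional law that is uniform on the orbit. That uniform law is the conditional law under $\Hcal'_{0,\bt}$, by \Cref{assn:exchangeability_multi}.

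\textbf{Neyman--Pearson on the orbit.} Under the true configuration $\bxi$ in the i.i.d.\ model, the conditional probability of each orbit element $\pi(\bx)$ is proportional to
\[
\prod_{k}\prod_{i\le\xi_k}f_{0,k}(x_{\pi_k(i),k})\prod_{i>\xi_k}f_{1,k}(x_{\pi_k(i),k}).
\]
Under the conditional null it is uniform. Along the orbit, the numerator in \eqref{eq:optimal_CPP_score} evaluated at $\bt$ is invariant under $\Pi_{\bt}$. Hence $S^{\mathrm{OPT}}(\pi(\bx),\bt)$ equals a constant minus the log of the conditional likelihood under $\bxi$. So $S^{\mathrm{OPT}}$ is a decreasing function of the alternative-to-null likelihood ratio, with the orientation chosen so that we keep $\bt$ when the data look typical under $\bt$ and reject when they favour $\bxi$.

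The rule $p_{\bt}\le\alpha$ rejects the smallest values of $S$. Conditionally on $\mathcal{G}_{\bt}$, the rejection region under $S^{\mathrm{OPT}}$ therefore consists of the orbit points with the largest likelihood under $\bxi$. By the Neyman--Pearson lemma on this finite space, it maximizes conditional power among all tests of conditional level $\alpha$. Any conformal test $\{p_{\bt}(S)\le\alpha\}$ is such a test.

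\textbf{Main obstacle: ties and randomization.} Neyman--Pearson optimality requires randomization at the boundary, while the deterministic conformal $p$-value uses $\le$ with ties counted favourably. I will handle this by comparing $S^{\mathrm{OPT}}$ only against non-randomized tests of conditional level $\le\alpha$, for which the greedy largest-likelihood region is optimal up to the tie class. If an exact inequality fails at ties, I would use the tie-broken (randomized) version of $p_{\bt}$, which coincides almost surely when the likelihoods are continuous.

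\textbf{Assembly.} Conditional power dominance gives $\P_{\bxi}(p_{\bt}(S^{\mathrm{OPT}})>\alpha\mid\mathcal{G}_{\bt})\le\P_{\bxi}(p_{\bt}(S)>\alpha\mid\mathcal{G}_{\bt})$, and taking expectations removes the conditioning. The remaining difficulty is that $p_{(j)}$ is a maximum over $\bt\in I_j$, and the conditioning $\sigma$-fields $\mathcal{G}_{\bt}$ differ across $\bt$. The plan is to compare event-wise: show $\{p_{(j)}(S^{\mathrm{OPT}})>\alpha\}$ has smaller probability by arguing per $\bt$. If that pointwise-in-$\bt$ argument is insufficient, I would interpret the optimality (as the theorem's phrasing suggests) at the level of each hypothesis pair $\Hcal'_{0,\bt}$ versus $\bxi$, summed over candidate configurations.

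Finally, strictly increasing transformations preserve all the $p$-values, so they preserve the sets; this is the strict case of \Cref{prop:score-properties}(ii).
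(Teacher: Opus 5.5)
Your per-$\bt$ argument is the paper's own. Conditioning on the left/right multisets (the $\Pi_{\bt}$-orbit) and applying Neyman--Pearson on that orbit is exactly the paper's conformal Neyman--Pearson lemma (Lemma~\ref{lem:conformal_np_lemma}). The paper's proof of the theorem is just that lemma applied with $\bt_1=\bt$, $\bt_2=\bxi$, using the randomized $\bar p_{\bt}$ of \eqref{eq:pvalue_conch_randomized}.

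The gap is the passage from per-$\bt$ power to $\E|\mathcal{K}^{\mathrm{CROC}}_{1-\alpha}|$, and two of your explicit claims there are wrong.

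\textbf{The $j=k$ term.} This term works against $S^{\mathrm{OPT}}$. Since $S^{\mathrm{OPT}}(\bx,\bxi)\equiv 0$, the deterministic $p_{\bxi}(S^{\mathrm{OPT}})$ equals $1$, so $k_\star$ is always included. Validity only forces a competitor to include $k_\star$ with probability at least $1-\alpha$. So the competitor can gain up to $\alpha$ there, and nothing forces the $j\neq k$ terms to compensate.

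Concretely, take well-separated $f_{0,j},f_{1,j}$ and changepoints with $\min(\xi_j,n-\xi_j)\ge 1/\alpha$. Then both scores below exclude every $j\neq k$ with probability near one. Let $S=S^{\mathrm{OPT}}+\epsilon\,h(x_{1,k})$, with $h$ bounded and strictly increasing and $\epsilon$ small. Then $p_{\bxi}(S)$ is the normalized rank of $x_{1,k}$ among $x_{1,k},\dots,x_{\xi_k,k}$. This gives $\E|\mathcal{K}^{\mathrm{CROC}}_{1-\alpha}(S)|\approx 1-\lfloor\alpha\xi_k\rfloor/\xi_k<1\le\E|\mathcal{K}^{\mathrm{CROC}}_{1-\alpha}(S^{\mathrm{OPT}})|$. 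So the set-size inequality that counts the true index cannot be proved by any route.

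\textbf{Ties.} Ties are not a null event, even for continuous data. $S^{\mathrm{OPT}}(\pi(\bx),\bt)$ depends on $\pi$ only through which elements fill the positions between $t_j$ and $\xi_j$ in each stream. For example, if $\bt$ differs from $\bxi$ only by $t_j=\xi_j+1$, it equals $\log(f_{0,j}/f_{1,j})(x_{\pi_j(\xi_j+1),j})$. So each orbit has large tie classes almost surely, and the deterministic and randomized $p$-values differ. Moreover, any score that breaks these ties while preserving the order of $S^{\mathrm{OPT}}$ on each orbit gives deterministic $p_{\bt}$ that is pointwise no larger, and strictly smaller on part of every non-singleton tie class. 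Neyman--Pearson optimality therefore holds for $\bar p_{\bt}$, not for the deterministic $p_{\bt}$ that defines $\mathcal{K}^{\mathrm{CROC}}_{1-\alpha}$. Your remedy (that the tie-broken version coincides almost surely) fails for exactly this reason.

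\textbf{Max-aggregation.} Even for $j\neq k$ and randomized $p$-values, the per-$\bt$ inequality does not transfer. Knowing $\P(\bar p_{\bt}(S^{\mathrm{OPT}})>\alpha)\le\P(\bar p_{\bt}(S)>\alpha)$ for each $\bt\in I_j$ does not give the corresponding inequality for $\max_{\bt\in I_j}\bar p_{\bt}$. The acceptance regions are optimized on different orbits (different $\mathcal{G}_{\bt}$), and a competitor can make them overlap more, shrinking their union. You flag this, but reading optimality pair by pair replaces the statement rather than proving it.

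What your argument establishes is the following: for each $\bt\neq\bxi$, the randomized test $\{\bar p_{\bt}(S^{\mathrm{OPT}})\le\alpha\}$ is most powerful against $\bxi$ among conformal tests of $\Hcal'_{0,\bt}$, conditionally on the multisets and hence unconditionally. The paper's proof also passes over these three points and establishes no more. A set-size statement would have to be restricted to false indices and would need a joint treatment of the maximum over $I_j$.
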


\subsection{Practical CPP scores}

The optimal score $S^{\mathrm{OPT}}$ depends on the unknown quantities $\boldsymbol{\xi}$ and $\{f_{0,k},f_{1,k}\}_{k=1}^K$, and is therefore not directly implementable. In practice, we construct a data-driven approximation. Specifically, we first obtain estimates $\{\hat{f}_{0,k},\hat{f}_{1,k}\}_{k=1}^K$, and then replace $\boldsymbol{\xi}$ by a maximum likelihood estimate $\hat{\boldsymbol{\xi}}=(\hat{\xi}_1,\ldots,\hat{\xi}_K)$, where
\begin{equation}\label{eq:mle}
    \hat{\xi}_k := \arg\max_{t\in[n]} \prod_{i\le t} \hat{f}_{0,k}(x_{i,k}) \prod_{i>t} \hat{f}_{1,k}(x_{i,k}).
\end{equation}
We then define the CPP score, referred to as the \emph{learned CPP score},
\begin{equation}\label{eq:practical_CPP_score}
S(\bx,\mathbf{t})
=\prod_{k=1}^K \left(
\frac{\prod_{i\le t_k} \hat{f}_{0,k}(x_{i,k})\prod_{i> t_k} \hat{f}_{1,k}(x_{i,k})}
{\prod_{i\le \hat{\xi}_k} \hat{f}_{0,k}(x_{i,k})\prod_{i> \hat{\xi}_k} \hat{f}_{1,k}(x_{i,k})}
\right).
\end{equation}

In some settings, the true pre- and post-change densities $\{f_{0,k},f_{1,k}\}_{k=1}^K$ may be known. In this case, we take $\hat{f}_{0,k}=f_{0,k}$ and $\hat{f}_{1,k}=f_{1,k}$ for each $k\in [K]$, and refer to the corresponding version of~\eqref{eq:practical_CPP_score} as the \emph{oracle CPP score}.

More generally, the densities can be learned from $\bX$ independently within each data stream. For instance, for the $k$-th stream, we may treat the observations $\{X_{1,k},\ldots,X_{n,k}\}$ as an unordered collection, apply an unsupervised clustering method to partition the data into two groups, and then use these groups to estimate the pre- and post-change densities separately.

In particular, in high-dimensional settings where direct density estimation is challenging, one may instead train a classifier to distinguish between pre- and post-change samples, and use the resulting logits as a proxy for the log-likelihood ratio. These logits can then be plugged directly into~\eqref{eq:practical_CPP_score}, yielding a computationally efficient and scalable CPP score. This approach is especially useful in settings such as high-dimensional text or image data.

\subsection{Asymptotic sharpness of \croc{} confidence sets}

We now show that, under mild conditions, the \croc{} confidence sets concentrate around the true root-cause index as the sample size grows.

Consider a sequence of root-cause analysis problems with $K_n \to \infty$ and $K_n/n \to 0$ as $n\to \infty$. For each $n$, we observe $\bX = \bigl((X_{i,j,n})\bigr)_{i\in[n],\,j\in[K_n]}$, where stream $j$ has a changepoint at $\xi_{j,n}\in[n-1]$ and follows an i.i.d.\ changepoint model with pre- and post-change distributions $P_{0,j}$ and $P_{1,j}$ admitting densities $f_{0,j},f_{1,j}$. Let $(p_{(k),n})_{k\in [K]}$ denote the resulting \croc{} $p$-values and $\mathcal{K}^{\croc}_{n,1-\alpha}$ the corresponding confidence set.

We analyze the oracle log-likelihood ratio (LLR) score $\ell_j(x)=\log\!\bigl(f_{0,j}(x)/f_{1,j}(x)\bigr)$, and impose a set of mild and interpretable conditions. First, we assume non-degenerate changepoints, in the sense that they lie away from the boundaries, i.e., $\min_{j}\min(\xi_{j,n},\,n-\xi_{j,n})/n \ge \tau$ for some $\tau\in(0,1/2)$. Second, we assume that in each stream, the pre- and post-change distributions are distinguishable, $\mathrm{KL}(P_{0,j}\|P_{1,j}),\,\mathrm{KL}(P_{1,j}\|P_{0,j})>0$, and the LLRs have uniformly bounded variance, $\sup_j \max\{\mathrm{Var}_{P_{0,j}}(\ell_j),\,\mathrm{Var}_{P_{1,j}}(\ell_j)\}<\infty$. Finally, we require a separation condition ensuring that the root changepoint appears sufficiently earlier than the others, namely $\min_{j\neq k_\star} (\xi_{n,j}-\xi_{n,k_\star})/(K_n\log K_n)\to\infty$ as $n\to\infty$.

\begin{theorem}\label{thm:conch_consistency_oracle}
Under the above conditions, $\max_{k\neq k_\star} p_{(k),n} \xrightarrow{P} 0$.
Consequently, we have that
\[
\P\bigl(\mathcal{K}^{\croc}_{n,1-\alpha} = \{k_\star\}\bigr) \ge 1-\alpha-\mathrm{o}(1).
\]
\end{theorem}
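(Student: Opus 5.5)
The second claim follows from the first together with \Cref{thm:validity_croc}. On the event $\{k_\star\in\mathcal{K}^{\croc}_{n,1-\alpha}\}\cap\{\max_{k\neq k_\star}p_{(k),n}\le\alpha\}$ the set equals $\{k_\star\}$. The first event has probability at least $1-\alpha$, and the second has probability $1-\mathrm{o}(1)$, so a union bound gives the display. The work is therefore to show $\max_{k\neq k_\star}p_{(k),n}\to 0$ in probability.

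Fix $k\neq k_\star$. Every $\bt\in I_k$ has $t_k<t_{k_\star}$, so at least one of two things happens: $t_{k_\star}\ge \xi_{k_\star}+d$, or $t_k\le \xi_{k_\star}+d$ (hence $t_k\le \xi_k-\Delta+d$, where $\Delta:=\min_{j\neq k_\star}(\xi_j-\xi_{k_\star})$). I take $d=\Delta/2$. In either case some coordinate $j\in\{k,k_\star\}$ has $|t_j-\xi_j|\ge \Delta/2$.

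The score is the oracle score \eqref{eq:practical_CPP_score}. Because the denominator uses the MLE $\hat\bxi$, it is permutation-invariant in distribution terms, and I would argue it only rescales. Taking logs, the score is $\sum_j L_j(t_j)$, with $L_j(t)=\sum_{i\le t}\ell_j(x_{i,j})-\max_s\sum_{i\le s}\ell_j(x_{i,j})$ up to constants.

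The plan is to bound $p_{\bt}$ by comparing the observed score to permuted scores. Under $\pi\in\Pi_{\bt}$, stream $j$ is shuffled within $[1,t_j]$ and $(t_j,n]$. When $t_j$ is wrong by $m=|t_j-\xi_j|$, the shuffled segment mixes pre- and post-change points. The permuted sequence then typically looks like a sequence whose change is spread out, while the observed data has its genuine change at $\xi_j$.

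Concretely, the observed $-L_j(t_j)$ is approximately $m\cdot\mathrm{KL}$, with fluctuations of order $\sqrt{m}$ by the variance bound. For a typical permutation, the partial-sum process of $\ell_j$ on the mixed segment is linear between $t_j$ and the segment end, so $-L_j(t_j)$ is only $O_P(\sqrt{n})$. More precisely, the permuted best-split gain relative to $t_j$ is of smaller order than $m\,\mathrm{KL}$ once $m\gg\sqrt n$; I would have to check this using $\tau$-boundedness. Hence $S(\bX,\bt)$ lies below $S(\pi\bX,\bt)$ for most $\pi$ whenever $|t_j-\xi_j|$ is large. The streams $j\notin\{k,k_\star\}$ contribute terms whose observed and permuted values have comparable $O_P$ fluctuations. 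I would control these with Chebyshev using the uniform variance bound. Alternatively, I would simply show that their contribution to the observed-versus-permuted gap is dominated.

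To make this a probability bound, I would write $p_{\bt}\le \P_\pi(S(\pi\bX,\bt)\le S(\bX,\bt)\mid\bX)$ and apply Markov/Chebyshev conditionally on $\bX$. This should yield $p_{\bt}\lesssim \exp(-c\,m)$ or $\lesssim C/m$, uniformly in the other coordinates of $\bt$.

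The main obstacle is the uniformity: the bound must hold for $\max_{k\neq k_\star}\max_{\bt\in I_k}$, which is a maximum over up to $n^{K_n}$ configurations. To handle it, I would note that the bound on $p_{\bt}$ depends only on the pair $(t_j,j)$ with $j\in\{k,k_\star\}$ and $|t_j-\xi_j|\ge\Delta/2$. The union is then effectively over $K_n$ streams and $n$ locations. I need an exponential-type tail, via a maximal inequality over $t$ (Kolmogorov/Hájek–Rényi for partial sums of $\ell_j$ in $t$), to get a per-stream failure probability of $o(1/K_n)$. The separation $\Delta/(K_n\log K_n)\to\infty$ should supply exactly this. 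The drift $\Delta\,\mathrm{KL}$ must beat the fluctuations from the other $K_n$ streams, which are of order $\sqrt{K_n n}$ in aggregate, together with a $\log K_n$ union factor. I expect this is where the $K_n\log K_n$ rate and $K_n/n\to0$ enter. Verifying that the permuted-score fluctuations in the remaining streams do not accumulate is the step I would scrutinize most.
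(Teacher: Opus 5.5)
Deriving the second claim from the first via \Cref{thm:validity_croc} is correct. So is your observation that every $\bt\in I_k$ has a coordinate $j\in\{k,k_\star\}$ with $|t_j-\xi_j|\ge\Delta/2$; the paper's Step~1 is the same triangle-inequality argument, centred at $\hat\bxi_n$. The genuine gap is where you suspected, in the remaining streams, and it is structural rather than a detail. Since $\bt$ ranges over all of $I_k$, the coordinates $t_j$ for $j\notin\{k,k_\star\}$ are arbitrary and $p_{\bt}$ depends on all of them, so no bound ``depending only on the pair $(t_j,j)$'' exists. Treating those streams as noise gives, by your own estimate, aggregate fluctuations of order $\sqrt{K_n n}$, and your per-stream comparison already needs $m\gg\sqrt n$. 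Neither follows from $\Delta_n\gg K_n\log K_n$ (take $K_n=\log n$, $\Delta_n=(\log n)^3$), so this bookkeeping proves the theorem only under a strictly stronger separation condition.

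The missing idea, which drives the paper's Steps~2--4, is that every stream supplies its own negative drift matched to its own permutation variance. Set $d_j(\bt)=|t_j-\hat\xi_{j,n}|$ and $M(\bt)=\sum_j d_j(\bt)$. Simultaneously for all $t_j\in[n-1]$ and $j\in[K_n]$, stream $j$'s observed-minus-permutation-mean score is at most $-c_1d_j(\bt)+O(\log K_n)$, and its permutation variance is $O(d_j(\bt))$. Summing gives drift at most $-c_1M(\bt)+O(K_n\log K_n)$ and variance $O(M(\bt))$. On $\|\hat\bxi_n-\bxi_n\|_\infty\le r_n=C\log K_n$, every wrong-root configuration has $M(\bt)\ge\Delta_n-2r_n$. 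Conditional Chebyshev then gives $p_{\bt}\lesssim 1/M(\bt)\lesssim 1/\Delta_n$ uniformly in $\bt$, with no union over configurations or pairs. The $K_n\log K_n$ rate is just $K_n$ streams each carrying an $O(\log K_n)$ localization error (\Cref{thm:mle_oracle_consistency}). The exponential tails behind that $\log K_n$ do not come from Kolmogorov or H\'ajek--R\'enyi, which give only polynomial tails under a variance bound. They come from $\E_{P_1}[e^{\theta\ell}]=\int f_0^{\theta}f_1^{1-\theta}<1$ for some $\theta\in(0,1)$ (\Cref{lem:negative_drift_partial_sums}).

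Your per-stream step also needs repair. The MLE term is not permutation invariant and does not ``only rescale''; your $L_j$ rightly keeps $\max_s\sum_{i\le s}\ell_j$ recomputed on $\pi(\bX)$. Since $\sum_{i\le t_j}\ell_j$ is invariant under $\Pi_{\bt}$, the observed-minus-permuted gap in stream $j$ is exactly the difference between the maxima of the original and permuted partial-sum processes. Your claim that the permuted gain is $O_P(\sqrt n)$, and of smaller order than $m\,\mathrm{KL}$, fails when $t_j$ overshoots far enough that the mixed segment has negative mean LLR. For example, take $\xi_j=\tau n$, $t_j=n-1$ and equal KLs: the permuted maximum sits near the start, and the permuted gain is of order $n$. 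What you need instead is a lower bound of order $\min(m,\tau n)$ on the gap itself, together with a matching variance bound, fed into the accounting above.

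Finally, you cannot bypass the recomputed MLE by freezing $\hat\bxi_n$ at its value on $\bX$, as the paper does in Lemma~\ref{lem:pvalue_inequality}. Because $\hat\xi_{j,n}(\pi(\bX))$ maximizes the permuted partial sums, $S(\pi(\bX),\bt)\le S^{(n)}(\pi(\bX),\bt)$ while $S(\bX,\bt)=S^{(n)}(\bX,\bt)$. This gives $\tilde p_{\bt,n}\le p_{\bt,n}$, the reverse of the inequality the reduction needs. So your instinct to bound the recomputed best-split gain from above is the right one. What has to change is the cross-stream accounting.
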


The proof is deferred to Appendix~\ref{app:uniform_decay_croc_pvalue}. This result shows that, under mild and interpretable conditions, the oracle LLR score yields asymptotically sharp localization of the root-cause index.

\section{\croc{} under cross-stream dependence}\label{sec:croc_dependence}
\IncMargin{1.2em}
\begin{algorithm}[t]
\caption{$\croc$-dep: CROC under structured cross-stream dependence}
\label{alg:croc-dep}
\KwIn{$\bX \in \prod_{k=1}^K \Xcal_k^n$, $\Rcal$, partition $\{G_1,\dots,G_M\}$, $1-\alpha$}
\KwOut{$\mathcal{K}^{\croc\text{-}\mathrm{dep}}_{1-\alpha}$}

\For{$\bt \in \Rcal$}{
    \For{$m \in [M]$}{
        Compute group-level $p$-value $p^{[G_m]}_{\bt}$ via $\croc$ restricted to $\bX^{[G_m]}$\;
    }
    $p_{\bt} \gets \min\{1,\, M \min_{m\in[M]} p^{[G_m]}_{\bt}\}$\;
}
\For{$k \in [K]$}{
    $I_k \gets \{\bt \in \Rcal : t_k < t_j \text{ for all } j\neq k\}$\;
    $p_{(k)} \gets \max_{\bt \in I_k} p_{\bt}$\;
}
$\mathcal{K}^{\croc\text{-}\mathrm{dep}}_{1-\alpha} \gets \{k\in[K]: p_{(k)} > \alpha\}$\;
\Return{$\mathcal{K}^{\croc\text{-}\mathrm{dep}}_{1-\alpha}$}
\end{algorithm}
\DecMargin{1.2em}
\Cref{assn:exchangeability_multi} assumes independence across streams, which may not hold in many applications where streams interact and changes in one can influence others, inducing dependence across streams. We therefore extend \croc{} to accommodate arbitrary cross-stream dependence, while retaining minimal within-stream structure: for each stream, the pre- and post-change segments are independent and separately exchangeable.

Let $\bX^{[k]} := (X_{1,k},\ldots,X_{n,k})$. Using the notation from Section~\ref{sec:croc_defn}, the null $\Hcal'_{0,\bt}$ decomposes as $\Hcal'_{0,\bt} = \bigcap_{k=1}^K \Hcal^{[k]}_{0,t_k}$, where $\Hcal^{[k]}_{0,r}$ asserts that $(X_{1,k},\ldots,X_{r,k})$ and $(X_{r+1,k},\ldots,X_{n,k})$ are independent and separately exchangeable. Under cross-stream dependence, the $p$-value in~\eqref{eq:pval_croc} is no longer directly applicable. However, each $\Hcal^{[k]}_{0,t_k}$ can be tested via the single-stream \conch{} procedure, yielding $p$-values $p^{[k]}_{t_k}$ from~\eqref{eq:pvalue_conch}. In the absence of knowledge of the dependence structure, we combine these via Bonferroni: $p_\bt = \min\bigl\{1,\, K \min_{k\in[K]} p^{[k]}_{t_k}\bigr\}, \qquad \bt \in \Rcal$.
Aggregating as in~\eqref{eq:pval_croc_agg} gives $p_{(k)}$, which remains valid for $\Hcal_{0,k}$ since each $p^{[k]}_{t_k}$ is valid (Section~\ref{sec:conch_single}). This yields the confidence set $\mathcal{K}^{\conch\text{-}\mathrm{agg}}_{1-\alpha} := \{k\in[K]: p_{(k)} > \alpha\}$,
which we refer to as the \conch{}-agg procedure, summarized in \Cref{alg:conch-agg}. Its finite-sample validity follows directly from that of \conch{}.

The $\conch$-agg procedure serves as a baseline method for root-cause analysis under arbitrary cross-stream dependence. While it guarantees valid inference without any knowledge of the dependence structure, the Bonferroni correction can be overly conservative, leading to wide confidence sets.

In many applications, however, partial information about the dependence structure is available. In particular, interactions across streams are often localized, allowing the streams to be partitioned into groups that are internally independent. This motivates a refinement of $\conch$-agg: by applying the full $\croc$ procedure within each independent group and aggregating across groups, one can leverage such structure to obtain substantially sharper confidence sets while retaining distribution-free validity. We develop this dependence-aware extension, termed $\croc$-dep, presented in \Cref{alg:croc-dep} and described in Appendix~\ref{app:dependent_stream}, along with numerical comparisons demonstrating its improved efficiency.

\section{Experiments}\label{sec:experiments}
In this section, we evaluate \croc{} via synthetic experiments. All \croc{} $p$-values are computed using the Monte Carlo approximation~\eqref{eq:pvalue_croc_MC} with $M=100$.

\subsection{Root-cause analysis in Gaussian mean-shift setting}\label{sec:gaussian_root_cause}
\begin{figure}[t]
\centering
\begin{minipage}{0.49\textwidth}
    \centering
    \includegraphics[width=\linewidth]{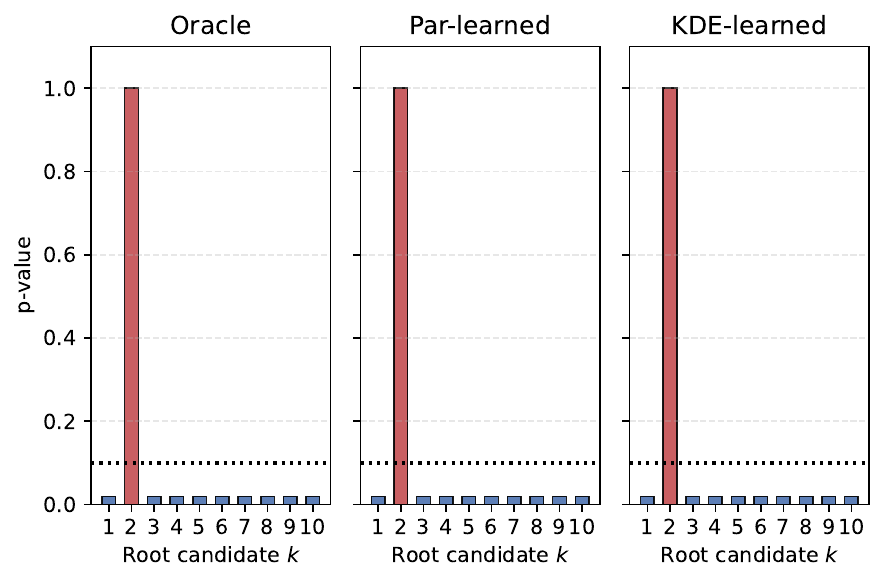}
\end{minipage}
\hfill
\begin{minipage}{0.49\textwidth}
    \centering
    \includegraphics[width=\linewidth]{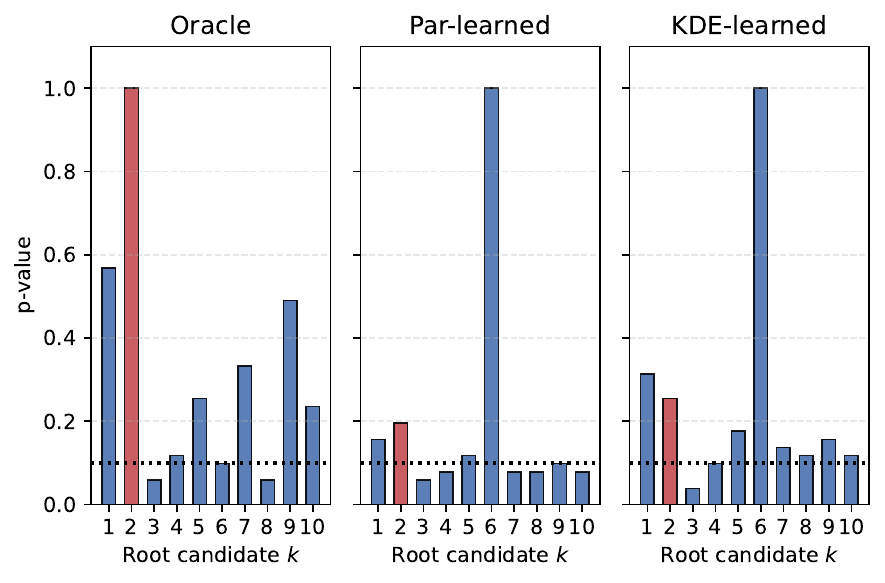}
\end{minipage}
\caption{CROC root-cause $p$-values across candidate streams. Left: Setting 1 (moderate separation). Right: Setting 2 (weak separation). The true root stream is highlighted in red, and the dotted line denotes $\alpha=0.1$.}
\label{fig:croc_root_gaussian}
\end{figure}

We evaluate $\croc$ on a multi-stream Gaussian mean-shift model to assess its efficiency in root-cause identification under varying signal strengths. We generate $K=10$ independent streams of length $n=80$. The root stream is $k^\star=2$ with changepoint $\xi_2=20$, while the remaining streams share a later changepoint at $\xi_k=50$ for all $k\neq 2$. The constraint set is then given by
\[
\Rcal=\{(i_1,\ldots,i_{10}) : \exists\, k\in[10]\ \text{such that}\ i_k=t_1<t_2=i_j,\ \forall j\neq k\}.
\]
For each stream $k$, $(X_{1,k},\ldots,X_{\xi_k,k})\overset{iid}{\sim} \mathcal{N}(\mu_{0,k},1)$ and $(X_{\xi_k+1,k},\ldots,X_{n,k})\overset{iid}{\sim} \mathcal{N}(\mu_{1,k},1)$,
where $\mu_{0,k}$ are drawn from an equi-spaced grid in $[-2,3]$, with $\mu_{1,2}=\mu_{0,2}+\delta_0$ for the root stream and $\mu_{1,k}=\mu_{0,k}+\delta_1$ otherwise. The parameters $(\delta_0,\delta_1)$ control the difficulty of root-cause 
identification. 
When $\delta_0$ is small relative to $\delta_1$, the signal in the root stream is weaker than in the subsequent streams, making it challenging to distinguish the true root cause from later changes. This challenge is further amplified since the latter share a common changepoint, so an algorithm may struggle to differentiate the root stream from the remaining streams. Motivated by this, we consider two regimes: \textbf{Setting 1} (moderate signal) with $\delta_0=1,\ \delta_1=2$, and \textbf{Setting 2} (weak signal) with $\delta_0=0.25,\ \delta_1=0.75$.

We evaluate $\croc$ using three CPP scores: (i) oracle log-likelihood ratio (LLR), (ii) parametrically learned LLR assuming Gaussianity, and (iii) nonparametrically learned LLR via kernel density estimation. Figure~\ref{fig:croc_root_gaussian} shows the \croc{} $p$-values for both settings, and highlight two key behaviors. In Setting 1, $\croc$ sharply isolates the true root stream, with its $p$-value exceeding the threshold while others remain below. In Setting 2, some non-root streams cross the threshold, reflecting increased difficulty; however, the true root stream is consistently retained in the confidence set.
Overall, $\croc$ remains reliable even when the root signal is weaker than subsequent changes, adaptively widening the confidence set in harder regimes while maintaining valid inclusion of the true root index.

\subsection{Root-cause analysis for image corruption detection}\label{sec:mnist_root_cause}

We evaluate $\croc$ in an image-based root-cause analysis setting using MNIST \citep{deng2012mnist}. We construct $K=5$ streams, each consisting of independent images drawn from fixed digit pairs—$(1,7)$, $(2,5)$, $(3,8)$, $(4,9)$, and $(0,6)$—that undergo a changepoint (corruption event) after which the images become blurred. Corruption is introduced by adding Gaussian noise with scale $\sigma$, yielding a semi-synthetic setup that mimics distributed systems where degradation may first arise in one stream and then propagate. The root stream (digits $(1,7)$) changes at time $t_1=25$ with mild noise $\sigma=0.1$, making the corruption visually subtle, while the remaining streams share a later changepoint at $t_2=50$ with stronger noise $\sigma=0.15$. This creates a challenging regime where the root signal is weaker than subsequent changes, and naive methods may misidentify the root.

For the CPP score, we pretrain a simple CNN on clean and heavily corrupted MNIST images ($\sigma=0.25$), so the learned representation is not tuned to detect the subtle corruption in the root stream. Despite these challenges, as shown in Figure~\ref{fig:croc_mnist_root}, $\croc$ constructs a confidence set that contains the true root index. Notably, the fifth stream attains the largest $p$-value, illustrating that reliance on point estimates can be misleading. In contrast, the conformal approach calibrates uncertainty and returns a valid confidence set, demonstrating robustness in high-dimensional, weak-signal settings.

\begin{figure}[t]
\centering
\begin{minipage}{0.49\textwidth}
    \centering
    \includegraphics[width=\linewidth]{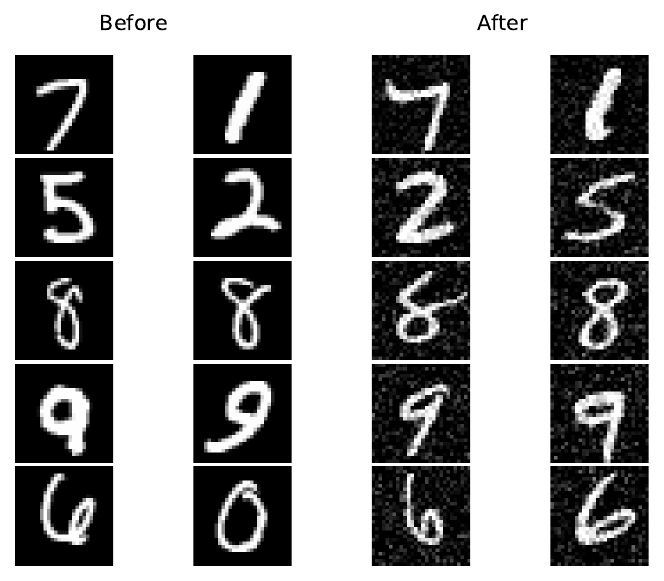}
\end{minipage}
\hfill
\begin{minipage}{0.49\textwidth}
    \centering
    \includegraphics[width=\linewidth]{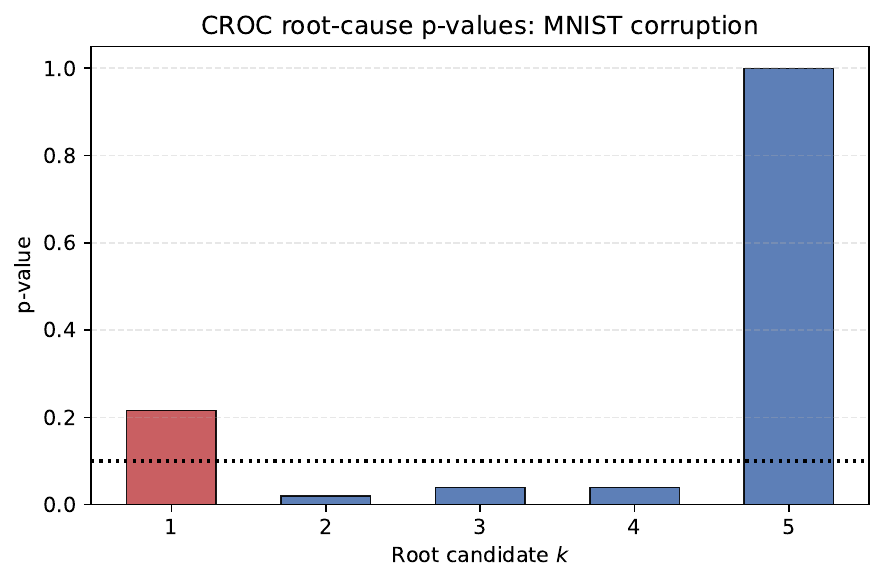}
\end{minipage}
\caption{Left: Example images from each stream before and after the changepoint. Right: $\croc$ root-cause $p$-values across streams. The true root stream is highlighted in red, and the dotted line denotes $\alpha=0.1$.}
\label{fig:croc_mnist_root}
\end{figure}

\section{Discussion}\label{sec:discussion}

We proposed \croc{}, a conformal framework that constructs finite-sample valid confidence sets for the root-cause index in a multi-stream setting under minimal exchangeability assumptions. Beyond validity, we established a universality property, showing that any distribution-free procedure can be represented within this framework via an appropriate choice of score function. We also characterized optimal score design under a canonical model and showed that practical approximations yield asymptotically sharp localization. Empirical results demonstrate that \croc{} performs effectively across a range of settings, including challenging regimes where the change in the root stream is difficult to detect. While we showed that our approach extends to dependent streams, improving efficiency under structured dependence and extending the framework to handle within-stream dependence remain important directions for future work.

\section*{Acknowledgments}
The authors acknowledge the funding from the Sloan Fellowship.

\bibliographystyle{chicago}
\bibliography{bibliography}
\appendix

\section{Additional experiment: root-cause analysis in multi-domain sentiment data}

We next demonstrate \croc{} on a multi-domain sentiment analysis task. We use the multi-domain sentiment dataset \citep{blitzer2007biographies}, which consists of reviews from four domains: books, DVD, electronics, and kitchen/housewares. Each domain is treated as an independent data stream with binary sentiment labels. 

To evaluate root-cause identification on this dataset, we construct a semi-synthetic setting in which the books domain undergoes a sentiment shift first, followed by a common later shift in the remaining three domains. This setup reflects realistic scenarios where changes in customer feedback for one product category propagate to related domains.

Concretely, we generate $K=4$ streams, each of length $n=150$. The books stream serves as the root stream with changepoint $\xi_1=50$, while the DVD, electronics, and kitchen/housewares streams share a later changepoint at $\xi_2=80$. Prior to the changepoint, reviews are $60\%$ positive; after the changepoint, they are $60\%$ negative. Representative examples include:
\begin{itemize}
    \item Positive: ``This book provides an excellent guide to finding and hiring the right people for your organization. \ldots''
    \item Positive: ``It does the hard work of picking out a crucial 16-bar piece of a song for you! Great for auditions. \ldots''
    \item Negative: ``The story was slow, predictable and boring. I am still amazed I hung in there \ldots''
    \item Negative: ``This truly is a horrible movie. Seven seconds to realize I made a mistake in renting it \ldots''
\end{itemize}

We use a DistilBERT model fine-tuned for sentiment classification \citep{sanh2019distilbert} (trained on the Stanford Sentiment Treebank dataset) to extract logits for each review, which are then used to construct the CPP score.

Figure~\ref{fig:croc_sentiment_root_cause} shows the resulting root-cause $p$-values. The books stream exhibits a clear peak, while the remaining streams lie below $\alpha=0.1$, indicating that \croc{} successfully identifies the domain where the sentiment shift first occurs.

\begin{figure}[!h]
    \centering
    \includegraphics[width=0.7\linewidth]{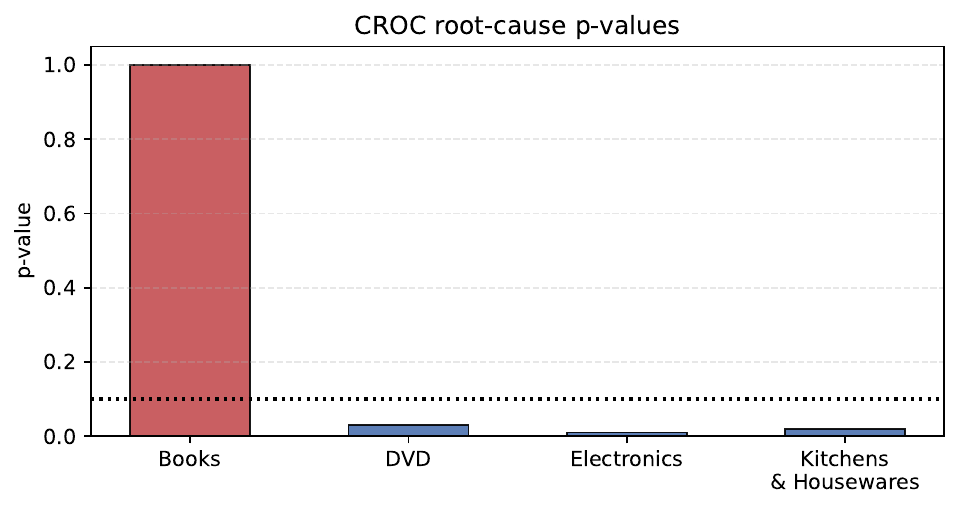}
    \caption{\croc{} root-cause $p$-values for the multi-domain sentiment experiment. The streams correspond to books, DVD, electronics, and kitchen/housewares reviews. The black dotted line denotes the threshold $\alpha=0.1$.}
    \label{fig:croc_sentiment_root_cause}
\end{figure}

\section{CROC under structured cross-stream dependence}\label{app:dependent_stream}

\IncMargin{1.2em}
\begin{algorithm}[t]
\caption{$\conch$-agg: Root cause analysis under arbitrary cross-stream dependence}
\label{alg:conch-agg}
\KwIn{$\bX \in \prod_{k=1}^K \Xcal_k^n$, $\Rcal$, $1-\alpha$}
\KwOut{$\mathcal{K}^{\conch\text{-}\mathrm{agg}}_{1-\alpha}$}

\For{$\bt \in \Rcal$}{
    \For{$k \in [K]$}{
        Compute single-stream $p$-value $p^{[k]}_{t_k}$ via \conch{} on $\bX^{[k]}$\;
    }
    $p_{\bt} \gets \min\{1,\, K \min_{k\in[K]} p^{[k]}_{t_k}\}$\;
}
\For{$k \in [K]$}{
    $I_k \gets \{\bt \in \Rcal : t_k < t_j \text{ for all } j\neq k\}$\;
    $p_{(k)} \gets \max_{\bt \in I_k} p_{\bt}$\;
}
$\mathcal{K}^{\conch\text{-}\mathrm{agg}}_{1-\alpha} \gets \{k\in[K]: p_{(k)} > \alpha\}$\;
\Return{$\mathcal{K}^{\conch\text{-}\mathrm{agg}}_{1-\alpha}$}
\end{algorithm}
\DecMargin{1.2em}

Building on the baseline method of \conch{}-agg framework (See \Cref{alg:conch-agg}) introduced in Section~\ref{sec:croc_dependence} for handling arbitrary cross-stream dependence, we show how additional knowledge of the dependence structure can be leveraged to improve over naive Bonferroni correction. 

Suppose the index set $[K]$ can be partitioned into disjoint groups $\mathcal{G}=\{G_1,\ldots,G_M\}$ such that $\bigcup_{m=1}^M G_m=[K]$ and $G_m\cap G_{m'}=\emptyset$ for $m\neq m'$, with the property that streams within each group are mutually independent, while arbitrary dependence may exist across groups.

For any candidate changepoint configuration $\bt=(t_1,\ldots,t_K)\in\Rcal$, recall that the null hypothesis can be written as $\Hcal'_{0,\bt}=\bigcap_{k=1}^K \Hcal^{[k]}_{0,t_k}$. Under the additional knowledge of these groups, this can be rewritten as $\Hcal'_{0,\bt}=\bigcap_{m=1}^M \Hcal^{[G_m]}_{0,\bt}$, where $\Hcal^{[G_m]}_{0,\bt}:=\bigcap_{k\in G_m}\Hcal^{[k]}_{0,t_k}$. For each group $G_m$, let $\bX^{[G_m]}:=(X_{i,k})_{i\in[n],\,k\in G_m}$ denote the corresponding data of streams in $G_m$. Since streams within $G_m$ are independent, $\Hcal^{[G_m]}_{0,\bt}$ can be tested directly using the \croc{} procedure restricted to the group $G_m$.

Accordingly, for each $\bt\in\Rcal$, we compute group-level conformal $p$-values
\[
p^{[G_m]}_{\bt}
:= \frac{1}{|\Pi^{[G_m]}_{\bt}|} \sum_{\pi\in \Pi^{[G_m]}_{\bt}}
\One{ S^{[G_m]}(\pi(\bX^{[G_m]}),\bt) \le S^{[G_m]}(\bX^{[G_m]},\bt)},
\]
where $\Pi^{[G_m]}_{\bt}$ denotes the split-permutation group acting jointly on the streams in $G_m$, and $S^{[G_m]}$ is a CPP score defined on $\bX^{[G_m]}$. Notably, the optimal score in~\eqref{eq:optimal_CPP_score} and its practical counterpart~\eqref{eq:practical_CPP_score} decompose additively across streams, which naturally induces group-level scores by restricting the summation to indices in $G_m$. Each $p^{[G_m]}_{\bt}$ is therefore a valid $p$-value for $\Hcal^{[G_m]}_{0,\bt}$.

We then combine these group-level $p$-values via a Bonferroni rule
\[
p_{\bt} := \min\bigl\{1,\; M \min_{m\in[M]} p^{[G_m]}_{\bt}\bigr\}, \qquad \bt\in\Rcal,
\]
which yields a valid $p$-value for the global null $\Hcal'_{0,\bt}$. Root-cause inference proceeds as in the main construction by defining $p_{(k)} := \max_{\bt\in I_k} p_{\bt}$ and lastly defining $\mathcal{K}^{\croc\text{-}\mathrm{dep}} := \{k\in[K]: p_{(k)}>\alpha\}$. The algorithm is summarized in \Cref{alg:croc-dep}.

This construction interpolates between two extremes: when all streams belong to a single group, it recovers the full \croc{} procedure; when each group is a singleton, it reduces to the \conch{}-agg construction. By exploiting partial independence, this approach can yield substantially sharper confidence sets compared to naive Bonferroni correction.

We illustrate this gain in a Gaussian mean-shift setting with $K=6$ streams of length $n=50$, using the oracle CPP score. The root stream ($k^\star=2$) undergoes an earlier changepoint at $t_1=15$, while the remaining streams share a later changepoint at $t_2=30$. Pre-change means are chosen from an equi-spaced grid in $[-3,3]$, and post-change means are shifted by $1$ for the root stream and $1.5$ for the others, with unit variance throughout.

We introduce structured dependence by simualating correlated data streams within steeam pairs $(1,2)$, $(3,4)$, and $(5,6)$ with correlation $\rho=0.65$, while the grouping $\{\{1,3,5\},\{2,4,6\}\}$ specifies sets of mutually independent streams. We compare the baseline \conch{}-agg procedure with \croc{}-dep, the dependence-aware variant. As shown in \Cref{fig:croc_structured_dep}, both methods retain coverage and include the true root-cause index. However, \conch{}-agg produces conservative confidence sets and fails to confidently exclude non-root streams. In contrast, \croc{}-dep leverages the known dependence structure to yield significantly sharper confidence sets, enabling more decisive identification of non-root streams.
This highlights the practicality of our \croc{} framework over the baseline aggregation of existing \conch{} confidence sets.

\begin{figure}[t]
    \centering
    \includegraphics[width=0.8\linewidth]{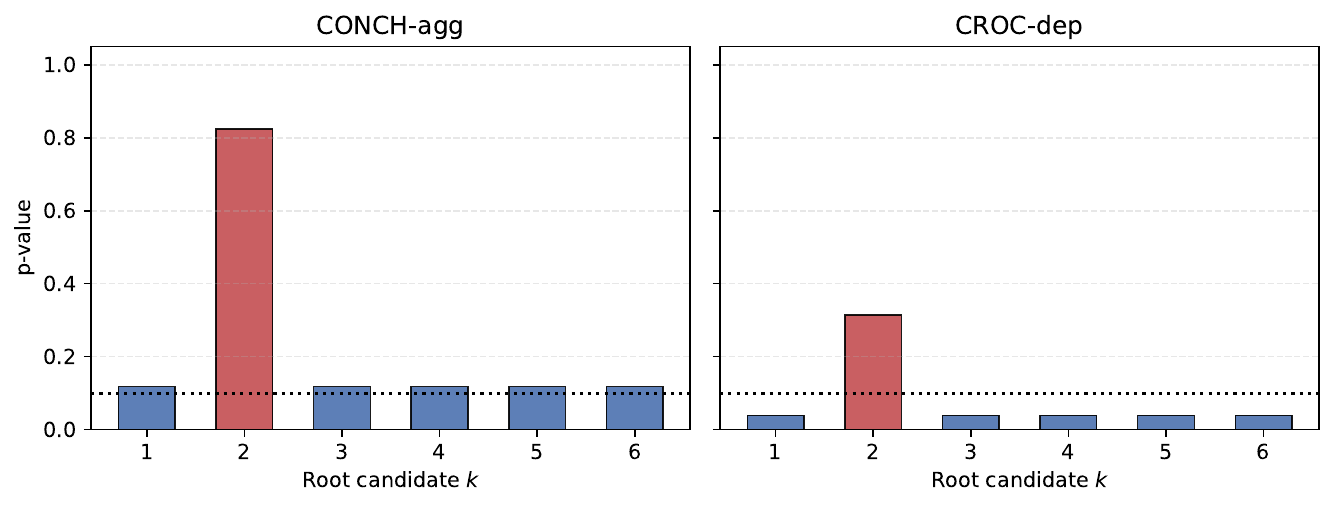}
    \caption{Comparison of \conch{}-agg and \croc{}-dep under structured cross-stream dependence with $K=6$ streams. The second stream is the root stream, and dependence is introduced within pairs $(1,2)$, $(3,4)$, and $(5,6)$. The plot shows root-cause $p$-values across candidate streams.}
    \label{fig:croc_structured_dep}
\end{figure}
\section{Proofs}
\subsection{Proving validity of \croc{} confidence sets}

\subsection{Proof of Theorem~\ref{thm:validity_croc}}
Fix $\bt\in \Rcal$.
Observe that under the null $\Hcal'_{0,\bt}$, $\pi(\bX)\overset{d}{=}\bX$ for any $\pi\in \Pi_{\bt}$. We define a function $p_{\bt}:\prod_{k=1}^K\Xcal_k^n\to [0,1]$ by
\[
p_{\bt}(\bx):=\frac{1}{|\Pi_{\bt}|}\sum_{\pi\in \Pi_{\bt}}\One{S(\pi(\bx),\bt)\leq S(\bx,\bt)},
\]
and note that $p_{\bt}\equiv p_{\bt}(\bX)$. Therefore, it follows that
\begin{align*}
    \P_{\Hcal'_{0,\bt}}\left(p_{\bt}(\bX)\leq \alpha\right)&=\frac{1}{|\Pi_{\bt}|}\sum_{\pi\in \Pi_{\bt}} \P_{\Hcal'_{0,\bt}}\left(p_{\bt}(\pi(\bX))\leq \alpha\right)\\
    &=\E_{\Hcal'_{0,\bt}}\left[\frac{1}{|\Pi_{\bt}|}\sum_{\pi\in \Pi_{\bt}}\One{p_{\bt}(\pi(\bX))\leq \alpha}\right]\\
    &=\E_{\Hcal'_{0,\bt}}\left[\frac{1}{|\Pi_{\bt}|}\sum_{\pi\in \Pi_{\bt}}\One{\frac{1}{|\Pi_{\bt}|}\sum_{\pi^\prime\in \Pi_{\bt}} \One{S(\pi^\prime(\bX), {\bt})\leq S(\pi(\bX),{\bt})}\leq \alpha}\right]\leq \alpha,
\end{align*}
where the penultimate step follows by noting that $\pi\circ \Pi_{\bt}=\Pi_{\bt}$, and the last step is a deterministic inequality. 

Consequently, since $\Hcal_{0,k}=\bigcup_{\bt\in I_k}\Hcal'_{0,\bt}$, recalling that $p_{(k)}=\max_{\bt\in I_k} p_{\bt}$, it follows that $\P_k(p_{(k)}\le \alpha)\le \alpha$.
This completes the proof.  $\hfill\mathsf{\square}$

\subsubsection{Proof of Theorem~\ref{thm:validity_croc_randz}}

Given permutations $\pi_{1,\bt},\ldots,\pi_{M,\bt}\in \Pi_\bt$, we define the function
\[
\tilde{p}_\bt(\bx;\pi_{1,\bt},\ldots,\pi_{M,\bt}):=\frac{1+\sum_{k=1}^M\One{S(\pi_{k,t}(\bx),\bt)\leq S(\bx,\bt)}}{1+M},
\]
Consider an additional uniform draw $\pi_{0,\bt}$ from $\Pi_\bt$. Note that with $\pi_{1,\bt},\ldots,\pi_{M,\bt}\overset{iid}{\sim}\textnormal{Unif}(\Pi_\bt)$, we have that marginally,
\[
(\pi_{1,\bt},\ldots,\pi_{M,\bt})\overset{d}{=}(\pi_{0,\bt}\circ\pi_{1,\bt},\ldots,\pi_{0,\bt}\circ\pi_{M,\bt}).
\]
Moreover, conditional on $\pi_{0,\bt},\pi_{1,\bt},\ldots,\pi_{M,\bt}$, $\bX\overset{d}{=}\pi_{0,\bt}(\bX)$ under the null $\mathcal{H}_{0,\bt}$. Consequently,
\begin{multline*}
    \tilde{p}_\bt(\bX;\pi_{1,\bt},\ldots,\pi_{M,\bt})\overset{d}{=}\tilde{p}_\bt(\bX;\pi_{0,\bt}\circ\pi_{1,\bt},\ldots,\pi_{0,\bt}\circ\pi_{M,\bt})\overset{d}{=}\tilde{p}_\bt(\pi_{0,\bt}(\bX);\pi_{0,\bt}\circ\pi_{1,\bt},\ldots,\pi_{0,\bt}\circ\pi_{M,\bt}).
\end{multline*}
Finally, note that for $\tilde{p}_\bt$, defined in \eqref{eq:pvalue_croc_MC},  $\tilde{p}_\bt\equiv \tilde{p}_\bt(\bX;\pi_{1,\bt},\ldots,\pi_{M,\bt})$, and therefore,
\begin{align*}
        \tilde{p}_\bt(\bX;\pi_{1,\bt},\ldots,\pi_{M,\bt})&\overset{d}{=}\tilde{p}_\bt(\pi_{0,\bt}(\bX);\pi_{0,\bt}\circ\pi_{1,\bt},\ldots,\pi_{0,\bt}\circ\pi_{M,\bt})\\
        &=\frac{1+\sum_{k=1}^M\One{S(\pi_{k,\bt}(\bX),\bt)\leq S(\pi_{0,\bt}(\bX),\bt)}}{M+1}\\
        &=\frac{\sum_{k=0}^M\One{S(\pi_{k,\bt}(\bX),\bt)\leq S(\pi_{0,\bt}(\bX),\bt)}}{M+1},
    \end{align*}
i.e., the rank of $S(\pi_{0,\bt}(\bX),\bt)$ in the exchangeable collection $\{S(\pi_{0,\bt}(\bX),\bt),S(\pi_{1,\bt}(\bX),\bt),\ldots, S(\pi_{M,\bt}(\bX),\bt)\}$. Consequently,
\[
\P_{\Hcal'_{0,\bt}}\left(\tilde{p}_\bt=\tilde{p}_\bt(\bX;\pi_{1,\bt},\ldots,\pi_{M,\bt})\leq \alpha\right)\leq \alpha.
\]
This proves the first part. Next, since $\Hcal_{0,k}=\bigcup_{\bt\in I_k}\Hcal'_{0,\bt}$, recalling that $\tilde{p}_{(k)}=\max_{\bt\in I_k} \tilde{p}_{\bt}$, it follows that $\P_k(\tilde{p}_{(k)}\le \alpha)\le \alpha$.
This completes the proof.  $\hfill\mathsf{\square}$

\subsubsection{Achieving exact validity for testing $\Hcal'_{0,\bt}$}\label{app:exact_coverage}
While both $p$-values $p_t$ and $\tilde{p}_t$ in \eqref{eq:pval_croc} and \eqref{eq:pvalue_croc_MC} control the Type~I error under the null $\Hcal'_{0,t}$ at level~$\alpha$, it is sometimes desirable to attain \emph{exact} level-$\alpha$ validity. Even though we are doing a max-aggregation afterwards, as in~\eqref{eq:pval_croc_agg}, achieving exact validity for testing $\Hcal'_{0,\bt}$ can yield more powerful or sharper procedures. To this end, we introduce a randomized refinement of the $p$-values that guarantees exact validity.
Specifically, define
\begin{equation}\label{eq:pvalue_conch_randomized}
    \bar{p}_\bt := \frac{1}{|\Pi_\bt|}\sum_{\pi\in \Pi_\bt} \One{S(\pi(\bX,\bt)) < S(\bX,\bt)}
    + U \cdot \frac{1}{|\Pi_\bt|}\sum_{\pi\in \Pi_\bt} \One{S(\pi(\bX),\bt) = S(\bX,\bt)},
\end{equation}
where $U \sim \textnormal{Unif}[0,1]$. 
\begin{theorem}\label{thm:coverage_exactly_1-alpha}
    For each $t \in \Rcal$, $\bar{p}_t$ defined in \eqref{eq:pvalue_conch_randomized} is a valid $p$-value under $\Hcal'_{0,\bt}$. In particular, for any $\alpha \in (0,1)$,
    \[
        \P_{\Hcal'_{0,\bt}}\!\left(\bar{p}_{\bt} \le \alpha\right) = \alpha.
    \]
\end{theorem}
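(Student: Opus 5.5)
The plan is the standard randomized-rank argument, conditioning on the orbit of the data under the group $\Pi_\bt$. We take $U\sim\textnormal{Unif}[0,1]$ independent of $\bX$. The first term of \eqref{eq:pvalue_conch_randomized} is read as $S(\pi(\bX),\bt)<S(\bX,\bt)$.

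\textbf{Step 1: reduce to the permuted copy.} Under $\Hcal'_{0,\bt}$ we have $\pi(\bX)\overset{d}{=}\bX$ for every $\pi\in\Pi_\bt$, exactly as in the proof of \Cref{thm:validity_croc}. Define the deterministic function $\bar p_\bt(\bx,u)$ by the formula in \eqref{eq:pvalue_conch_randomized}. If we draw $\sigma\sim\textnormal{Unif}(\Pi_\bt)$ independently of $(\bX,U)$, then
\[
(\bX,U)\overset{d}{=}(\sigma(\bX),U),
\]
so it suffices to show $\P(\bar p_\bt(\sigma(\bX),U)\le\alpha)=\alpha$. We will prove this conditionally on $\bX=\bx$, for every fixed $\bx$.

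\textbf{Step 2: a deterministic statement about the orbit.} Fix $\bx$. Since $\Pi_\bt$ is a group, $\pi\mapsto\pi\circ\sigma$ is a bijection of $\Pi_\bt$. Hence
\[
\bar p_\bt(\sigma(\bx),u)=F^-(V)+u\,\bigl(F(V)-F^-(V)\bigr),
\]
where $V:=S(\sigma(\bx),\bt)$ and $F$ is the empirical CDF of the multiset $\{S(\pi(\bx),\bt):\pi\in\Pi_\bt\}$, which carries mass $1/|\Pi_\bt|$ per element. Here $F^-(v)$ denotes the left limit of $F$ at $v$. Because $\sigma$ is uniform on $\Pi_\bt$, the random variable $V$ has law exactly $F$.

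\textbf{Step 3: the randomized probability integral transform.} For a random variable $V\sim F$ and an independent $U\sim\textnormal{Unif}[0,1]$, the quantity $F^-(V)+U(F(V)-F^-(V))$ is exactly $\textnormal{Unif}[0,1]$. We verify this directly for a discrete $F$. Let the atoms be $v_1<\dots<v_r$ with masses $m_1,\dots,m_r$. Conditional on $V=v_j$, the quantity is uniform on the interval $[F(v_{j-1}),F(v_j)]$, which has length $m_j$. These intervals tile $[0,1]$, and interval $j$ is selected with probability $m_j$, so the mixture is exactly $\textnormal{Unif}[0,1]$. Therefore the conditional probability, given $\bX=\bx$, that $\bar p_\bt\le\alpha$ equals $\alpha$. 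Integrating over $\bx$ gives $\P_{\Hcal'_{0,\bt}}(\bar p_\bt\le\alpha)=\alpha$, which in particular yields validity.

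The only delicate point is Step 1. Equality in distribution of $\bar p_\bt(\bX,U)$ and $\bar p_\bt(\sigma(\bX),U)$ requires $U$ to be independent of both $\bX$ and $\sigma$, and the data must be invariant under every element of $\Pi_\bt$, not just on average. This is where \Cref{assn:exchangeability_multi} is used: it gives independence across streams and separate exchangeability of the two segments within each stream, hence invariance under every element of $\Pi_\bt$. The rest is bookkeeping for ties, which is handled by the interval-tiling argument in Step 3.
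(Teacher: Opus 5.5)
Your proof is correct and is essentially the paper's argument: both write $\bar p_\bt$ as the randomized probability integral transform $F^-(V)+U\,(F(V)-F^-(V))$ of the score with respect to the permutation distribution over the $\Pi_\bt$-orbit, and both use null invariance to get $V\sim F$. The differences are cosmetic. The paper conditions on the left/right multisets and cites Lemma~E.1 of \citet{dandapanthula2025offline}; you instead introduce an independent uniform $\sigma\in\Pi_\bt$ and condition on $\bX$ (the same device the paper uses for \Cref{thm:validity_croc_randz}), and you verify the interval-tiling step directly, which avoids conditioning on multisets.
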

\begin{proof}
    We begin by letting $F$ denote the distribution of $S(\pi(\bX),\bt)$ with $\pi \sim \textnormal{Unif}(\Pi_\bt)$, conditional on the multisets $M_{\textnormal{left}}^{(1)},\ldots, M_{\textnormal{left}}^{(K)}$  and $M_{\textnormal{right}}^{(1)},\ldots, M_{\textnormal{right}}^{(K)}$ where we write for each $k\in [K]$,
    \[
    M_{\textnormal{left}}^{(k)}:=\{X_{1,k},\ldots,X_{t_k,k}\},\qquad M_{\textnormal{right}}^{(k)}:=\{X_{t_k+1,k},\ldots,X_{n,k}\}.
    \]
Then, we observe
\[
\bar{p}_\bt = \lim_{y \uparrow S(\bX,\bt)} F(y) + U \bigl(F(S(\bX,\bt)) - \lim_{y \uparrow S(\bX,\bt)} F(y)\bigr).
\]
Under $\Hcal'_{0,\bt}$, conditional on the multisets $\bigl\{M_{\textnormal{left}}^{(k)}\bigr\}_{k=1}^K$  and $\bigl\{M_{\textnormal{right}}^{(k)}\bigr\}_{k=1}^K$, we have $S(\bX,\bt) \overset{d}{=} S(\pi(\bX),\bt)$.
Hence, by \citet[Lemma~E.1]{dandapanthula2025offline}, the $p$-value $\bar{p}_t$, conditional on the same multisets, follows $\textnormal{Unif}[0,1]$ (see also \citealp{brockwell2007universal}). 
Therefore,
\[
\P_{\Hcal'_{0,\bt}}(\bar{p}_\bt \le \alpha) 
= \E_{\Hcal'_{0,\bt}}\!\left[\P_{\Hcal'_{0,\bt}}\!\left(\bar{p}_\bt \le \alpha \mid \bigl\{M_{\textnormal{left}}^{(k)}\bigr\}_{k=1}^K, \bigl\{M_{\textnormal{right}}^{(k)}\bigr\}_{k=1}^K\right)\right]
= \E_{\Hcal'_{0,\bt}}[\alpha] = \alpha.
\]
This completes the proof.
\end{proof}

\subsection{Proof of universality (Theorem~\ref{thm:universality_croc})}

Fix $n\in \N$. First, given the confidence set $C$, we consider the CPP score
\[
S(\bx,\bt) = \One{\argmin_{k\in [K]}t_k \in C(\bx)} \in \{0,1\},
\]
We will show that the \croc{} confidence set constructed from this score, denoted by $\mathcal{K}^{\croc}_{1-\alpha}$, coincides exactly with the given confidence set~$C$.

To that end, we start with showing that $\mathcal{K}^{\croc}_{1-\alpha}(\bX) \supseteq C(\bX)$; that is, if for some $k\in [K]$, $k \in C(\bX)$, then $p_{(k)}>\alpha$. In fact, it suffices to show that for some $\bt\in I_k$, $p_{\bt}>\alpha$. This in fact holds by choosing any $\bt\in I_k$.
Then, with the above choice of CPP score, $S(\bX,\bt) = 1$, and consequently,
\[
p_{\bt} = \frac{1}{|\Pi_\bt^K|}\sum_{\pi \in \Pi_\bt^K} \One{S(\pi(\bX),\bt) \leq S(\bX,\bt)}
     = \frac{1}{|\Pi_\bt^K|}\sum_{\pi \in \Pi_\bt^K} \One{S(\pi(\bX),\bt) \leq 1} = 1.
\]

Next, we need to show that $\mathcal{K}^{\croc}_{1-\alpha}(\bX) \subseteq C(\bX)$, i.e., if $k \notin C(\bX)$, then $p_{(k)} \leq \alpha$. To that end, we first claim that for any $k \in [K]$, for any $\bt \in I_k$, and any $\bx \in \prod_{k=1}^K\mathcal{X}^{n}$, it holds that
\begin{equation}\label{eq:claim}
    \frac{1}{|\Pi_\bt|}\sum_{\pi \in \Pi_\bt} \One{k \in C(\pi(\bx))} \geq 1 - \alpha.
\end{equation}
Let us first show that how this claim implies the main result. Observe that if $k \notin C(\bX)$, then for any $\bt \in I_k$, we have $S(\bX,\bt) = 0$. Consequently, for any such $\bt\in \Rcal$
\begin{align*}
p_t &= \frac{1}{|\Pi_t|}\sum_{\pi \in \Pi_t} \One{S(\pi(\bX),\bt) \leq S(\bX,\bt)} \\
    &= \frac{1}{|\Pi_t|}\sum_{\pi \in \Pi_t} \One{S(\pi(\bX),\bt) \leq 0}
     = \frac{1}{|\Pi_t|}\sum_{\pi \in \Pi_t} \One{k\notin C(\pi(\bX))} \leq \alpha,
\end{align*} 
where the last step follows from \eqref{eq:claim}. This proves the other direction.

Now, it is left to prove the claim~\eqref{eq:claim}. Fix any $\bt \in I_k$ and sample $\pi=(\pi_1,\ldots,\pi_K)$ uniformly from the set of permutations $\Pi_\bt$. Define $\tilde{\bX}:= (\tilde{X}_{i,j})_{i\in [n],j\in [K]}$ such that for each $j\in [K]$, 
\[
(\tilde{X}_{1,j},\ldots,\tilde{X}_{n,j})=\pi_j((x_{1,j},\ldots,x_{n,j})).
\]
 Conditional on the multisets $\bigl\{\{x_{1,j}, \ldots, x_{t_j,j}\},\{x_{t_j+1,j}, \ldots, x_{n,j}\}: j=1,\ldots, K\bigr\}$, we have for each $j\in [K]$ that
\[
\tilde{X}_{1,j}, \ldots, \tilde{X}_{t,j} \text{ are exchangeable, and } \tilde{X}_{t+1,j}, \ldots, \tilde{X}_{n,j} \text{ are exchangeable}, 
\]
and that $(\tilde{X}_{1,j}, \ldots, \tilde{X}_{t,j})$ and $(\tilde{X}_{t+1,j}, \ldots, \tilde{X}_{n,j})$ are independent, implying that the sampling process of $\tilde{\bX}$ satisfies the modeling assumptions imposed by the null $\Hcal'_{0,t}$. Consequently, for the data stream $(\tilde{\bX})$, $k_\star=k$, and by the theorem statement
\[
\P_{\pi \sim \text{Unif}(\Pi_t)}\big(k \in C(\tilde{\bX}) \,\big|\, \bigl\{\{x_{1,j}, \ldots, x_{t_j,j}\},\{x_{t_j+1,j}, \ldots, x_{n,j}\}: j=1,\ldots, K\bigr\}\big) \geq 1 - \alpha,
\]
or equivalently, \eqref{eq:claim} holds. This completes the proof. $\hfill\mathsf{\square}$

\subsection{Proving properties of the CPP score and deriving its optimal form}

\subsubsection{Proof of Proposition~\ref{prop:score-properties}}
Fix $t\in \Rcal$. If $S$ satisfies $S(\cdot,\bt)=S(\pi(\cdot),\bt)$ for all $\pi\in \Pi_t$, then by \eqref{eq:pval_croc} note that $p_\bt$ is identically equal to $1$. This proves the first part.

For the second part, fix $t\in \Rcal$. By \eqref{eq:pval_croc},
\[
p_{\bt,1}=\frac{1}{|\Pi_\bt|}\sum_{\pi\in \Pi_\bt} \One{S(\pi(\bX),\bt)\leq S(\bX,\bt)},\quad 
p_{\bt,2}=\frac{1}{|\Pi_\bt|}\sum_{\pi\in \Pi_\bt} \One{f(S(\pi(\bX),\bt))\leq f(S(\bX,\bt))}.
\]
Since $f$ is non-decreasing, $S(\pi(\bX),\bt)\leq S(\bX,\bt)$ implies $f(S(\pi(\bX),\bt))\leq f(S(\bX,\bt))$,
and therefore $p_{\bt,1}\leq p_{\bt,2}$. As this holds for all $\bt\in \Rcal$, it further follows that $C_1\subseteq C_2$.  This proves the second part.
$\hfill\mathsf{\square}$

\subsubsection{Proving optimal form of CPP score}

We recall that the \croc{} algorithm (\Cref{alg:croc}) first tests the null $\Hcal'_{0,\bt}$ for each $\bt\in \Rcal$ via the $p$-value~\eqref{eq:pval_croc}, and then aggregates these values as in~\eqref{eq:pval_croc_agg}. Consequently, the problem of defining an optimal CPP score for root-cause analysis reduces to constructing powerful tests for the null hypotheses $\Hcal'_{0,\bt}$. In particular, if $\bxi$ denotes the true changepoint configuration, we aim to design tests that reject $\Hcal'_{0,\bt}$ for $\bt\neq \bxi$ with high power. 

Since \croc{} relies on conformal $p$-values, we analyze the testing problem of $\Hcal'_{0,\bt}$ against $\Hcal'_{0,\bxi}$ conditional on the relevant multisets. Specifically, fix any $\bt\in\Rcal$. For each $j\in[K]$, define
\[
\mathcal{X}_{L,\bt,j}:=\{X_{1,j},\ldots,X_{t_j,j}\}, \quad 
\mathcal{X}_{R,\bt,j}:=\{X_{t_j+1,j},\ldots,X_{n,j}\},
\]
as the (unordered) left and right multisets for the $j$th stream. For any $\mathbf{r}\in \Rcal$, let $\Pcal^{(\mathbf{r})}$ denote the conditional distribution of $\bX$ given $(\mathcal{X}_{L,\bt,1},\mathcal{X}_{R,\bt,1},\ldots,\mathcal{X}_{L,\bt,K},\mathcal{X}_{R,\bt,K})$ under $\Hcal'_{0,\mathbf{r}}$, and write $\Hcal'_{\mathbf{r}}$ to hypothesize that
\[
\bX\mid(\mathcal{X}_{L,\bt,1},\mathcal{X}_{R,\bt,1},\ldots, \mathcal{X}_{L,\bt,K},\mathcal{X}_{R,\bt,K})\sim\Pcal^{(\mathbf{r})}.
\]

Restricting attention to conformal $p$-values, we follow the standard construction of \croc{}. Given a CPP score $S:(\prod_{k=1}^K \Xcal_k^n)\times \Rcal \to \R$, define $s(\cdot)=S(\cdot,\bt)$ and let $\Pi_\bt$ denote the permutation set, the randomized conformal $p$-value (cf.~\eqref{eq:pvalue_conch_randomized}) is
\begin{equation}\label{eq:conformal_pvalue}
p_\bt(s) =\frac{1}{|\Pi_\bt|}\sum_{\pi\in\Pi_\bt}\One{s(\pi(\bX))<s(\bX)}\ +\ U\cdot \frac{1}{|\Pi_\bt|}\sum_{\pi\in\Pi_\bt}\One{s(\pi(\bX))=s(\bX)},
\end{equation}
where $U\sim\mathrm{Unif}(0,1)$ is independent of $\bX$ and the permutations.

The analysis in Section~\ref{app:exact_coverage} establishes that the test $\phi_\bt(\bX;s)=\One{p_\bt(s)\leq \alpha}$ controls Type~I error exactly at level $\alpha$ under $\Hcal'_{0,\bt}$ for any score function $s$. We therefore seek an optimal score $s^\star$ such that the corresponding test maximizes power against $\Hcal'_{0,\bxi}$. The following lemma, a conformal analogue of the Neyman--Pearson lemma, is the key ingredient of our analysis.

\begin{lemma}\label{lem:conformal_np_lemma}
Fix $\bt_1=(t_{11},\ldots,t_{1K}),\bt_2=(t_{21},\ldots,t_{2K})\in \Rcal$ with $\bt_1\neq \bt_2$. The power $\E_{\Hcal^\prime_{\bt_2}}[\phi_{\bt_1 }(\bX;s)]$ is maximized by the score function
\[
s^\star(\bx) := \prod_{j=1}^K\left(\frac{\prod_{i\leq t_{1j}}f_{0,j}(x_{i,j})\cdot \prod_{i>t_{1j}} f_{1,j}(x_{i,j})}{\prod_{i\leq t_{2j}}f_{0,j}(x_{i,j})\cdot \prod_{i>t_{2j}} f_{1,j}(x_{i,j})}\right).
\]
\end{lemma}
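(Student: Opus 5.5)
Conditional on the multisets $\mathcal{M}:=(\mathcal{X}_{L,\bt_1,j},\mathcal{X}_{R,\bt_1,j})_{j\in[K]}$ (built with $\bt=\bt_1$), the test $\phi_{\bt_1}(\bX;s)$ becomes a randomized test on a finite sample space. That space is the orbit $\{\pi(\bx):\pi\in\Pi_{\bt_1}\}$ of the observed arrangement, and we apply the Neyman--Pearson lemma on it.

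\textbf{Step 1 (null on the orbit).} Under $\Hcal'_{0,\bt_1}$ in the i.i.d.\ model, the joint density of $\bX$ is invariant under every $\pi\in\Pi_{\bt_1}$. Hence, given $\mathcal{M}$, $\bX$ is uniform over the orbit, counted with multiplicity over $\pi\in\Pi_{\bt_1}$.

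\textbf{Step 2 (alternative on the orbit).} Under $\Hcal'_{0,\bt_2}$, write $g_{\bt_2}(\bx):=\prod_j\prod_{i\le t_{2j}}f_{0,j}(x_{i,j})\prod_{i>t_{2j}}f_{1,j}(x_{i,j})$ for the joint density. Given $\mathcal{M}$, the conditional law puts mass proportional to $g_{\bt_2}(\pi(\bx))$ on $\pi(\bx)$. Since $g_{\bt_1}$ is constant on the orbit, this mass is also proportional to $g_{\bt_2}(\pi(\bx))/g_{\bt_1}(\pi(\bx))$. So the conditional likelihood ratio of alternative to null at a point $\bx'$ of the orbit is, up to a constant depending only on $\mathcal{M}$, equal to $1/s^\star(\bx')$.

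\textbf{Step 3 (conditional NP).} By Theorem~\ref{thm:coverage_exactly_1-alpha}, for any $s$ the randomized conformal test has conditional size exactly $\alpha$ given $\mathcal{M}$. It rejects when $\bX$ lies among the points with the smallest $s$-values, randomizing on ties so that exactly mass $\alpha$ is rejected. The Neyman--Pearson lemma on this finite space says the most powerful exact level-$\alpha$ test rejects for large values of the likelihood ratio, i.e.\ small values of $s^\star$, with boundary randomization. With $s=s^\star$, the conformal test $\One{p_{\bt_1}(s^\star)\le\alpha}$ is exactly of this form: $p_{\bt_1}$ is the randomized rank of $s^\star(\bX)$ from below, and the randomization with $U$ assigns the correct fraction of the tied boundary set. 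Therefore $\E_{\Hcal'_{\bt_2}}[\phi_{\bt_1}(\bX;s)\mid\mathcal{M}]\le\E_{\Hcal'_{\bt_2}}[\phi_{\bt_1}(\bX;s^\star)\mid\mathcal{M}]$ for every $s$. Taking expectations over $\mathcal{M}$ gives the unconditional claim.

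\textbf{Main obstacle.} The delicate point is Step 3's identification of the conformal test with the NP test. One must verify that, among orbit points, "$p\le\alpha$" selects exactly the lowest-$s^\star$ points with total null mass $\alpha$, handling ties through $U$. The multiplicities must also be counted consistently: distinct $\pi$ may give the same $\pi(\bx)$, and the uniform measure over $\Pi_{\bt_1}$ handles this. A second point to check is whether the multisets $\mathcal{M}$ are sufficient under both hypotheses. They need not be; I only need the conditional law under the alternative to be proportional to $g_{\bt_2}$ on the orbit, which follows from Bayes' rule since the orbit is the conditioning fiber. The direction of the ratio also needs care: $s^\star$ has the null density in its numerator, so small values indicate the alternative, consistent with rejecting for small $p$.
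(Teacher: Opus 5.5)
Your proposal is correct and takes essentially the same route as the paper. Both condition on the $\bt_1$-multisets and note that the conditional likelihood ratio of $\Pcal^{(\bt_2)}$ to $\Pcal^{(\bt_1)}$ is proportional to $1/s^\star$. Both then apply the Neyman--Pearson lemma, using the exact conditional size of the randomized conformal test (Theorem~\ref{thm:coverage_exactly_1-alpha}), and identify $\One{p_{\bt_1}(s^\star)\le\alpha}$ with the NP test through the conditional $\alpha$-quantile of $s^\star(\pi(\bX))$. Your handling of the direction (reject for small $s^\star$) is cleaner than the paper's displayed implications, which compare $s^\star(\bX)^{-1}$ against a quantile of $s^\star$ rather than of its reciprocal.
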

\begin{proof}
In the above setup, we consider the following hypothesis testing problem:
\begin{align*}
&\Hcal^\prime_{0}:\bX\mid(\mathcal{X}_{L,\bt_1,1},\mathcal{X}_{R,\bt_1,1},\ldots, \mathcal{X}_{L,\bt_1,K},\mathcal{X}_{R,\bt_1,K})\sim\Pcal^{(\bt_1)}
\quad \text{vs.} \\
&\qquad\Hcal^\prime_{1}:\bX\mid(\mathcal{X}_{L,\bt_1,1},\mathcal{X}_{R,\bt_1,1},\ldots, \mathcal{X}_{L,\bt_1,K},\mathcal{X}_{R,\bt_1,K})\sim\Pcal^{(\bt_2)}.
\end{align*}

Given samples $\bX\in(\prod_{k=1}^K \Xcal_k^n)$, observe that
\[
\frac{\mathsf{d}\bigl(\mathcal{P}^{(\bt_2)}\bigr)}{\mathsf{d}\bigl(\mathcal{P}^{(\bt_1)}\bigr)}(\bX)\propto 
\prod_{j=1}^K\left(\frac{\prod_{i\leq t_{2j}}f_{0,j}(x_{i,j})\cdot \prod_{i>t_{2j}} f_{1,j}(x_{i,j})}{\prod_{i\leq t_{1j}}f_{0,j}(x_{i,j})\cdot \prod_{i>t_{1j}} f_{1,j}(x_{i,j})}\right)
= {s^\star(\bX)}^{-1}.
\]

By the Neyman--Pearson lemma \citep[Theorem~3.2.1~(ii)]{lehmann2005testing}, any test $\phi(\bX)$ that attains exact level $\alpha$ under $\Hcal_{0}^\prime$ and satisfies
\begin{equation}\label{eqn:NP_optimal_form}
    \phi(\bX)=
\begin{cases}
    1 & \text{if } {s^\star(\bX)}^{-1} > \tau_{\alpha},\\[4pt]
    0 & \text{if } {s^\star(\bX)}^{-1} < \tau_{\alpha},
\end{cases}
\end{equation}
for some threshold $\tau_{\alpha}\in \R$, is most powerful for testing $\Hcal_{0}^\prime$ against $\Hcal_{1}^\prime$.

From Section~\ref{app:exact_coverage}, we know that the test $\phi_\bt(\cdot;s)=\One{p_\bt(s)\leq \alpha}$ controls Type~I error exactly at level $\alpha$ under $\Hcal_{0}^\prime$ for any score function $s$. Therefore, to establish optimality of $s^\star$, it suffices to show that $\phi_\bt(\cdot;s^\star)$ has the form given in \eqref{eqn:NP_optimal_form}.

Let $\bX_{\pi}=\pi(\bX)$ for $\pi\sim \mathrm{Unif}(\Pi_\bt)$, and let $F_{s^\star(\bX_\pi)}$ denote the conditional cumulative distribution function of $s^\star(\bX_{\pi})$ given $\bX$. Define
\[
\tau_{\alpha}:=\inf\{y\in \R: F_{s^\star(\bX_\pi)}(y)\geq \alpha\}.
\]
By the definition of $p_\bt$ in \eqref{eq:conformal_pvalue}, we have
\begin{align*}
    {s^\star(\bX)}^{-1} > \tau_{\alpha} &\implies p_\bt(s^\star)\leq \alpha,\\
    {s^\star(\bX)}^{-1} < \tau_{\alpha} &\implies p_\bt(s^\star)> \alpha,
\end{align*}
which establishes the desired form. This completes the proof.
\hfill$\square$
\end{proof}

\paragraph{Proof of Theorem~\ref{thm:optimal_score}}
We begin by noting that only the coordinate $S(\cdot,\bt)$ determines the conformal $p$-value $\bar{p}_\bt$ in \eqref{eq:pvalue_conch_randomized}. Applying Lemma~\ref{lem:conformal_np_lemma} with $\bt_1=\bt$ and $\bt_2=\bxi$ yields the optimal form of the score. Finally, since $\log$ is an increasing transformation, Proposition~\ref{prop:score-properties} gives the expression for $S^\textnormal{OPT}(\cdot,\bt)$.
\hfill$\square$

\subsection{Proving asymptotic sharpness of $\croc$ confidence set}\label{app:consistency}
In this section, we give the proof for the asymptotic sharpness of \croc{} confidence set computed with the oracle LLR score. First, we start with a few notation.

 Recall that $\ell_k(x)=f_{0,k}(x)/f_{1,k}(x)$, and hence observe that
$\mathrm{KL}(P_{0,k}\|P_{1,k})=\E_{X\sim P_{0,k}}[\ell_k(X)]$ and  $\mathrm{KL}(P_{1,k}\|P_{0,k})=\E_{X\sim P_{1,k}}[-\ell_k(X)],$
where $\mathrm{KL}(P\|Q)$ denotes the Kullback-Leibler divergence between distributions $P$ and $Q$. Moreover, we define the Jeffreys divergence as
\[
\mathrm{J}(P_{0,k},P_{1,k})=\mathrm{KL}(P_{0,k}\|P_{1,k})+\mathrm{KL}(P_{1,k}\|P_{0,k}).
\]
Accordingly, we write
\[
J_{\min}=\min_{k\in \N} \mathrm{J}(P_{0,k},P_{1,k}),\qquad J_{\max}=\max_{k\in \N} \mathrm{J}(P_{0,k},P_{1,k}).
\]
We assume throughout that $0<J_{\min}<J_{\max}<\infty$.
The corresponding var-entropy measures are given by $\sigma_{0,k}^2:=\mathrm{Var}_{X\sim P_{0,k}}(\ell_k(X))$ and $
\sigma_{1,k}^2:=\mathrm{Var}_{X\sim P_{1,k}}(\ell_k(X))$, and we define \[
\sigma_\star= \max_{k\in [\N]}\max\{\sigma_{0,k},\sigma_{1,k}\}.\]
We assume that $0<\sigma_\star<\infty$.
Next, given an estimate of $\ell_k$, for any $k\in [n-1]$, we define the empirical averages
where
we write for $\bt_1=(t_{11},\ldots,t_{1K})\in \Rcal$ and $\bt_2=(t_{21},\ldots,t_{2K})\in \Rcal$,
\begin{align*}
    &\Gamma(\bt_1,\bt_2):=\sum_{k=1}^K |t_{1k}-t_{2k}|\cdot\left(\frac{\one{t_{1k}>t_{2k}}}{t_{1k}}\sum_{i=1}^{t_{1k}} \ell_k(X_{i,k,n}) -\frac{\one{t_{1k}<t_{2k}}}{n-t_{1k}}\sum_{i=t_{1k}+1}^n \ell_k(X_{i,k,n}) \right),~~\text{and}\\
    &\hspace{4cm}\hat{v}_{n,k}:=\frac{1}{n}\sum_{i=1}^{n}\ell^{\,2}_k(X_{i,k,n}),\qquad k\in \N.
\end{align*}

Now, instead of working with the true \croc{} $p$-values directly, we would work with an approximation of the same, which is derived by freezing the MLE estimate. We first define them formally. Throughout, we write $\hat{\xi}_n\equiv \hat{\xi}_n(\bX)$ to denote the MLE on the original data defined in \eqref{eq:mle}. Then, the oracle CPP score admits the simplified expression
\[
S(\bx,\bt)=\sum_{k=1}^K {\rm sgn}(t_k-\hat{\xi}_{k,n}(\bx))\cdot\biggl\{\sum_{i=t_k\wedge \hat{\xi}_{k,n}(\bx)+1}^{t_k\vee \hat{\xi}_{k,n}(\bx)}\ell_k(x_{i,k,n})\biggr\}.
\]
This means that for computing $S(\pi(\bx),\bt)$, one needs to compute the MLE again on the permuted data, i.e., $\hat{\xi}_{k,n}(\pi(bx))$. This then gives us the conformal $p$-values $p_{\bt,n}$ for each $\bt\in \Rcal$, and once aggregated, we get the final $p$-values $p_{(k),n}$. To construct an intermediate approximation to these $p$-values, we freeze the MLE on the original data, and do not recompute it again for the permuted data streams. In particular, consider the score function,
\begin{equation}\label{eqn:score-for_consistency}
S(\bx,\bt)=\sum_{k=1}^K {\rm sgn}(t_k-\hat{\xi}_{k,n})\cdot\biggl\{\sum_{i=t_k\wedge \hat{\xi}_{k,n}+1}^{t_k\vee \hat{\xi}_{k,n}}\ell_k(x_{i,k})\biggr\}
\end{equation}
Based on this score, we define the permutation $p$-values $(\tilde{p}_{1,n},\ldots,\tilde{p}_{n-1,n})$,
\begin{equation}\label{eqn:pvalue_for_consistency}
\tilde{p}_{t,n}
:=\frac{1}{|\Pi_t|}\sum_{\pi\in \Pi_t}\,\One{S_t^{(n)}(\pi(\bX))\le S_t^{(n)}(\bX)}.
\end{equation}
The only distinction between these $p$-values and the \conch{} $p$-values $\{p_{t,n}\}$ is that here, when evaluating $S_t^{(n)}(\pi(\bx))$, we \emph{do not} recompute $\hat{\xi}_n$ under the permutation. These `frozen-estimator' $p$-values $\{\tilde p_{t,n}\}$ will serve as intermediate quantities in our analysis.

\subsubsection{Consistency of the MLE estimate}\label{app:mle_consistency}
\begin{theorem}\label{thm:mle_oracle_consistency}
    Suppose, at sample size $n$, $\hat{\bxi}_{n}:=\hat{\bxi}_{n}(\bX)$ be as defined in~\eqref{eq:mle}, and that $0<\sigma_\star<\infty$. Then, it holds that
    \[\|\hat{\bxi}_{n}-\bxi_n\|_\infty=\mathrm{O}_P(\log K_n).\]
\end{theorem}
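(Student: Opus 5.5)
The MLE in \eqref{eq:mle} with oracle densities decouples across streams: $\hat\xi_{k,n}=\argmax_{t}\Lambda_k(t)$, where
\[
\Lambda_k(t):=\sum_{i\le t}\log f_{0,k}(X_{i,k,n})+\sum_{i>t}\log f_{1,k}(X_{i,k,n}).
\]
It therefore suffices to prove a single-stream tail bound of the form
\[
\P\bigl(|\hat\xi_{k,n}-\xi_{k,n}|>m\bigr)\le C e^{-cm},
\]
with constants $C,c$ depending only on $J_{\min}$ and $\sigma_\star$ (and uniform in $k$ and $n$). A union bound over $k\in[K_n]$ then gives
\[
\P\bigl(\|\hat\bxi_n-\bxi_n\|_\infty>M\log K_n\bigr)\le CK_n^{1-cM}\to 0
\]
for $M>1/c$, which is the $\mathrm{O}_P(\log K_n)$ claim. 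If only a polynomial tail $Cm^{-p}$ is available, the same union bound would instead yield a rate $K_n^{1/p}$, so the argument really needs a geometric or exponential tail.

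For the single-stream bound, write $\Lambda_k(t)-\Lambda_k(\xi)$ as a random walk in the distance $d=|t-\xi|$. For $t=\xi+d$ it equals $\sum_{i=\xi+1}^{\xi+d}\ell_k(X_{i,k})$, a sum of i.i.d.\ terms from $P_{1,k}$ with mean $-\mathrm{KL}(P_{1,k}\|P_{0,k})<0$. For $t=\xi-d$ it equals $-\sum_{i=\xi-d+1}^{\xi}\ell_k(X_{i,k})$, a sum of i.i.d.\ terms from $P_{0,k}$ with mean $-\mathrm{KL}(P_{0,k}\|P_{1,k})<0$. Then $|\hat\xi-\xi|>m$ requires one of these two negatively drifted walks to reach a nonnegative value at some time $d>m$. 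The standard peeling argument handles this: split $d\in(2^{j}m,2^{j+1}m]$, and on each block bound the probability that the maximal deviation of the centered walk exceeds drift$\cdot 2^jm$.

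The main obstacle is that only bounded variance of $\ell_k$ is assumed. With variance alone, Kolmogorov's maximal inequality gives a block probability of $\sigma_\star^2 2^{j+1}m/(\kappa 2^jm)^2$. Summing over $j$ yields $\P(|\hat\xi-\xi|>m)\lesssim \sigma_\star^2/(\kappa^2 m)$, where $\kappa$ is a uniform lower bound on the two KL divergences. Since $\mathrm{KL}(P_{0,k}\|P_{1,k})$ and $\mathrm{KL}(P_{1,k}\|P_{0,k})$ must each be bounded below separately, this requires slightly more than $J_{\min}>0$. This is tight enough for $\mathrm{O}_P(1)$ per stream but not for the union bound. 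To get exponential tails, I would first try a Chernoff bound: since $\E_{P_{1,k}}[e^{\ell_k/2}]=\int\sqrt{f_0f_1}<1$ and similarly for $P_0$, the Bhattacharyya coefficient gives exponential decay without extra moment assumptions. Concretely, the process $\exp\{\tfrac12(\Lambda_k(t)-\Lambda_k(\xi))\}$ is a supermartingale in $d$ whose expectation is $\rho_k^{d}$ with $\rho_k=\int\sqrt{f_{0,k}f_{1,k}}\,d\mu<1$. By Markov's inequality and a sum over $d>m$,
\[
\P\bigl(\exists\, d>m:\ \Lambda_k(\xi\pm d)\ge\Lambda_k(\xi)\bigr)\le \frac{2\rho_k^{m}}{1-\rho_k}.
\]
Uniformity requires $\sup_k\rho_k<1$. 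I would derive this from the assumed KL and variance conditions (bounded variance plus KL bounded below prevents the Hellinger affinity from approaching $1$), or state it as the uniform-separation interpretation of $J_{\min}>0$. Boundary effects, where $t$ ranges only over $[n]$, only shorten the walks and so cause no difficulty.
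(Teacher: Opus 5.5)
Your proof is correct and follows the paper's route. You obtain a per-stream geometric tail for the negatively drifted log-likelihood walk via a union bound over the distance $d=|t-\xi_{k,n}|$, then take a union bound over the $K_n$ streams with $M\asymp\log K_n$. The only difference is that you get the fixed-$d$ bound $\rho_k^{d}$ from an explicit Chernoff step at $\theta=1/2$ (the Bhattacharyya coefficient), whereas the paper cites the large-deviation bound in Lemma~\ref{lem:negative_drift_partial_sums}.

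The uniformity $\sup_k\rho_k<1$ that you flag is used silently by the paper, which treats $C_0,C_1$ as stream-independent. It does follow from the stated assumptions alone. With $H^2:=2(1-\rho_k)$ one can show $\mathrm{KL}(P_{0,k}\|P_{1,k})\le 5H^2+2H\bigl(\sigma_\star+\mathrm{KL}(P_{0,k}\|P_{1,k})\bigr)$, and symmetrically for $\mathrm{KL}(P_{1,k}\|P_{0,k})$. Since one of the two divergences is at least $J_{\min}/2$, $H$ is bounded away from zero. So no assumption beyond $J_{\min}>0$ and $\sigma_\star<\infty$ is needed, contrary to your hedge.
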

\begin{proof}
We start by fixing $k\in [K_n]$. Recall that
\[
\hat{\xi}_{k,n}=\hat{\xi}_{k,n}(\bX)\in \argmax_{s\in [n-1]} L_k(s),\qquad \text{where}~L_k(s)=\sum_{i=1}^s \ell_k(X_{i,k,n}).
\]
For any $t>\xi_{k,n}$, we can write $L_k(t)=L_k(\xi_{k,n})+\sum_{s=\xi_{k,n}+1}^t \ell_k(X_{s,k,n})$.
By a union bound, for any $M\in \N$, we obtain
\begin{align*}
    \P(\,\hat{\xi}_{k,n}\ge \xi_{k,n}+M)
    &= \P\bigl(\cup_{t\,\ge\,\xi_{k,n}+M}\{L_k(t)\ge L_k(\xi_{k,n})\}\bigr)\\
    &\le \sum_{t\,\ge\,\xi_{k,n}+M}^n \P\,(L_k(t)\ge L_k(\xi_{k,n}))= \sum_{t\,\ge\,\xi_{k,n}+M}^n \P\,\Bigl(\sum_{s=\xi_{k,n}+1}^t \ell_k(X_{s,k,n})\ge 0\Bigr).
\end{align*}
Observe that $\{X_{s,k,n}\}_{s=\xi_{k,n}+1}^t$ are i.i.d samples from $P_{1,k}$. Therefore, by Lemma~\ref{lem:negative_drift_partial_sums}, there exists a constant $\gamma>0$ such that
\begin{align*}
    \P(\,\hat{\xi}_{k,n}\ge \xi_{k,n}+M)\le \sum_{t\,\ge\,\xi_{k,n}+M}^n e^{-\gamma\, (t-\xi_{k,n})}\le C_0\,C_1^M,
\end{align*}
for some $C_0>0$ and $C_1<1$. By another union bound, we have 
\begin{align*}
    \P(\,\text{there exists $k\in [K_n]$ such that}\,\, \hat{\xi}_{k,n}\ge \xi_{k,n}+M)\le K_n,C_0\,C_1^M,
\end{align*}
Similarly, we can show that there exists $C'_0>0$ and $C'_1<1$ such that 
\begin{align*}
    \P(\,\text{there exists $k\in [K]$ such that}\,\, \hat{\xi}_{k,n}\le \xi_{k,n}-M)\le K_n,C'_0\, {C'_1}^M,
\end{align*}
Together, this yields $\|\hat{\bxi}_{n}-\bxi_n\|_\infty=\mathrm{O}_P(\log K_n)$, as required.
\end{proof}

\subsection{Proving asymptotic sharpness of \croc}\label{app:uniform_decay_croc_pvalue}

Without loss of generality, we may assume $k_\star=1$. Recalling that for each $k\in [K_n]$, the \croc{} $p$-values are defined as $p_{(k),n}=\max_{\bt\in I_{k,n}}p_{\bt,n}$, it suffices to show that
\[
\max_{k\neq 1}\max_{\bt\in I_{k,n}}p_{\bt,n}\overset{P}{\longrightarrow}0.
\]
Throughout this proof, we write
\[
\Delta_n=\min_{j\in [K_n]: j\neq 1} (\xi_{j,n}-\xi_{1,n}),
\]
which denotes the separation between the changepoint of the root-cause stream and that of the remaining streams.

The proof from here is split into four key steps. 

\paragraph{Step 1: Reduction to a high probability event, and frozen-estimator p-values.}
First, we note that by Lemma~\ref{thm:mle_oracle_consistency}, there exists $C>0$ such that, writing $r_n:=C\log(K_n)$ and defining $\mathcal E_n:=\{\|\hat\bxi_n-\bxi_n\|_\infty\le r_n\}$, we have $\P(\mathcal E_n)\to 1$ as $n\to\infty$. Moreover, on the event $\mathcal E_n$, for every $k\neq 1$,
\[
\hat\xi_{k,n}-\hat\xi_{1,n} \ge \xi_{k,n}-\xi_{1,n}-2r_n \ge \Delta_n-2r_n.
\]
Now fix any $k\neq 1$ and any $\bt\in I_{k,n}$. Since by definition, $t_k<t_1$, by triangle inequality,
\[
|t_1-\hat\xi_{1,n}|+|t_k-\hat\xi_{k,n}|
\ge | (\hat\xi_{k,n}-\hat\xi_{1,n}) + (t_1-t_k)|
\ge \hat\xi_{k,n}-\hat\xi_{1,n}.
\]
Thus, it follows that
\[
|t_1-\hat\xi_{1,n}|+|t_k-\hat\xi_{k,n}|\ge \Delta_n-2r_n.
\]
Therefore, writing $M(\bt):=\sum_{j=1}^{K_n} |t_j-\hat\xi_{j,n}|$,
we have, uniformly over all wrong-root configurations,
\[
\inf_{k\neq 1}\inf_{\bt\in I_{k,n}}M(\bt)\ge \Delta_n-2r_n.
\]
Since by the theorem statement, $\Delta_n/r_n\to\infty$, on $\mathcal E_n$, for all sufficiently large $n$, $M(\bt)\ge \frac{\Delta_n}{2}$.

Let $\tilde p_{\bt,n}$ denote the frozen-estimator p-value, defined in~\eqref{eqn:pvalue_for_consistency}. By Lemma~\ref{lem:pvalue_inequality}, deterministically, $p_{\bt,n}\le \tilde p_{\bt,n}$ for all $\bt\in \Rcal$.
Therefore, it suffices to prove that
\[
\sup_{k\neq 1}\sup_{\bt\in I_{k,n}}\tilde p_{\bt,n}\overset{P}{\longrightarrow}0.
\]

\paragraph{Step 2: Deterministic upper bound on frozen-estimator $p$-values.}
First, note that $\mathrm{Var}_{X\sim P}(\ell_j(X))\le \sigma_\star^2$ and $\E_{X\sim P}(\ell_j(X))\le J_{\max}$  for both $P=P_{0,j}$ and $P=P_{1,j}$. Therefore, it follows that
\[
\E_{X\sim P_{0,j}} \ell_j^2(X),\quad \E_{X\sim P_{1,j}} \ell_j^2(X) \le \sigma_\star^2+J_{\max}^2.
\]
Furthermore, for any nonnegative weights $(a_1,\dots,a_{K_n})$ with $\sum_{j=1}^{K_n} a_j=1$,
\[
\E\Big[\sum_{j=1}^{K_n} a_j \hat v_{n,j}\Big] \le \sigma_\star^2+J_{\max}^2.
\]
Fix $\eta\in (0,1)$.
Applying Markov's inequality yields that, with probability at least $1-\eta$,
\[
\sum_{j=1}^{K_n} a_j \hat v_{n,j} \le \frac{\sigma_\star^2+J_{\max}^2}{\eta}.
\]
Choosing $a_j = \frac{d_j(\bt)}{M(\bt)}$ for $j\in [K_n]$, we obtain
\[
\sum_{j=1}^{K_n} d_j(\bt)\hat v_{n,j}
\le \frac{\sigma_\star^2+J_{\max}^2}{\eta} M(\bt).
\]
Since the changepoints $(\xi_{n,j})$ lie in the interior, we also have for each $j\in [K_n]$,
\[
\frac{n}{\min(\hat\xi_{n,j},n-\hat\xi_{n,j})} \le \frac{1}{\tau}.
\]
Hence, by Lemma~\ref{lem:deterministic_bound_onn_tildep}, we conclude that
\begin{equation}\label{eq:frozen_pvalue_event}
    \P\left(\sup_{k\neq 1}\sup_{\bt\in I_{k,n}}\tilde p_{\bt,n}
\le \sup_{k\neq 1}\sup_{\bt\in I_{k,n}}\left\{
\frac{
\left(\frac{\sigma_\star^2+J_{\max}^2}{\eta\tau}\right)M(\bt)
}{\{S^{(n)}(X,\bt)-\Gamma(\bt,\hat\bxi_n)\}^2}
+\mathbf 1\{S^{(n)}(X,\bt)-\Gamma(\bt,\hat\bxi_n)\ge0\}\right\}\right)\ge 1-\eta.
\end{equation}

Next, we prove a uniform negative drift bound for $S^{(n)}(X,\bt)-\Gamma(\bt,\hat\bxi_n)$ over all wrong-root configurations.

\paragraph{Step 3: Uniform negative drift of $S^{(n)}(X,\bt)-\Gamma(\bt,\hat\bxi_n)$.}
We now establish a uniform negative drift for $S^{(n)}(\bX,\bt)-\Gamma(\bt,\hat\bxi_n)$. In particular, we show that there exists $c_0>0$ such that
\begin{equation}\label{eq:negative_drift_event}
    \P\!\left(\left\{\sup_{k\neq 1}\sup_{\bt\in I_{k,n}}
\frac{S^{(n)}(X,\bt)-\Gamma(\bt,\hat\bxi_n)}{M(\bt)}\le -c_0\right\}\cap \mathcal{E}_n\right)\ge 1-\eta +\mathrm{o}(1).
\end{equation}
For each $j\in[K_n]$, we define $d_j(\bt):=|t_j-\hat\xi_{j,n}|$ and note that 
\[
S^{(n)}(\bX,\bt)-\Gamma(\bt,\hat\bxi_n)=\sum_{j=1}^{K_n}D_j(t_j,\hat\xi_{j,n}),
\]
where we let
\[
D_j(t_j,\hat\xi_{j,n}):=
\begin{cases}
\sum_{i=\hat\xi_{j,n}+1}^{t_j}\ell_j(X_{i,j,n})-\frac{d_j(\bt)}{t_j}\sum_{i=1}^{t_j}\ell_j(X_{i,j,n}),& t_j>\hat\xi_{j,n},\\[2ex]
\sum_{i=t_j+1}^{\hat\xi_{j,n}}(\ell_j(X_{i,j,n}))+\frac{d_j(\bt)}{n-t_j}\sum_{i=t_j+1}^{n}\ell_j(X_{i,j,n}),& t_j<\hat\xi_{j,n},\\[2ex]
0,& t_j=\hat\xi_{j,n}.
\end{cases}
\]
Now, for each $j\in [K_n]$, we derive a upper bound on each $D_j(t_j,\hat\xi_{j,n})$ that holds uniformly over all choices of $t_j\in [n-1]$. Consider first the case $t_j>\hat\xi_{j,n}$. On the event $\mathcal E_n$, a direct decomposition gives $D_j(t_j,\hat\xi_{j,n})\le T_{1j}+T_{2j}+T_{3j}$,
where we define
\begin{align*}
T_{1j}
:=
-\frac{d_j(\bt)}{t_j}
&\sum_{i=1}^{\xi_{j,n}-r_n}\ell_j(X_{i,j,n}),\qquad T_{2j}:=\sum_{|i-\xi_{j,n}|\le r_n}|\ell_j(X_{i,j,n})|,\\
&T_{3j}:=\frac{\hat\xi_{j,n}}{t_j}\sum_{i=\xi_{j,n}+r_n+1}^{t_j}\ell_j(X_{i,j,n}).
\end{align*}
Here $T_{3j}$ is interpreted as zero if $\xi_{j,n}+r_n+1>t_j$. We control each of these terms one by one. First, since
$X_{1,j,n},\ldots,X_{\xi_{j,n}-r_n,j,n}$ are i.i.d.\ from $P_{0,j}$,
Lemma~\ref{lem:negative_drift_partial_sums} gives that, for each $j$,
\[
\P\!\left(
\frac{1}{\xi_{n,j}-r_n}
\sum_{i=1}^{\xi_{n,j}-r_n}\ell_j(X_{i,j})
\ge cJ_{\min}
\right)
\ge 1-e^{-\gamma(\xi_{n,j}-r_n)}.
\]
We now use the interiority condition to lower bound the ratio $(\xi_{j,n}-r_n)/t_j$. Since $t_j\le n$ and $\xi_{j,n}\ge\tau n$ for all $j\in[K_n]$,
\[
\frac{\xi_{j,n}-r_n}{t_j}
\ge
\frac{\xi_{j,n}-r_n}{n}
\ge
\tau-\frac{r_n}{n}.
\]
As $r_n=\mathrm{O}(\log K_n)=\mathrm{o}(n)$, for all sufficiently large $n$, uniformly over $j$, $\frac{\xi_{j,n}-r_n}{t_j}\ge \frac{\tau}{2}$.
Consequently, setting $c_1:=\frac{c\tau J_{\min}}{2}$, we have
\[
\P\left(T_{1j}\le -c_1d_j(\bt)~~\textnormal{for all}~~t_j\in [n-1]\right)\ge 1-e^{-\gamma(\xi_{j,n}-r_n)}
\] 
Next, we control $T_{3j}$ uniformly over $t_j$. Since, in the particular case of consideration, $\hat\xi_{j,n}/t_j\le1$, we further have
\[
T_{3j}
\le
\sup_{s\ge \xi_{j,n}+r_n+1}
\sum_{i=\xi_{j,n}+r_n+1}^{s}\ell_j(X_{i,j,n}).
\]
For $s>\xi_{j,n}+r_n$, we observe that $X_{\xi_{j,n}+r_n+1,j,n},\ldots, X_{s,j,n}$ are drawn i.i.d.\ from the distribution $P_{1,j}$.
By Lemma~\ref{lem:negative_drift_partial_sums}, we obtain that there exists $c'>0$ and $\gamma'>0$
\[
\P\!\left(
\sup_{s\ge \xi_{j,n}+r_n+1}
\sum_{i=\xi_{j,n}+r_n+1}^{s}\ell_j(X_{i,j,n})
>c'r_n
\right)
\le e^{-\gamma'r_n}.
\]
A union bound over $j\in[K_n]$ gives
\[
\P\!\left(
\sup_{j\in[K_n]}
\sup_{s\ge \xi_{j,n}+r_n+1}
\sum_{i=\xi_{j,n}+r_n+1}^{s}\ell_j(X_{i,j,n})
>c'r_n
\right)
\le K_ne^{-\gamma'r_n}.
\]

If instead, $t_j<\hat\xi_{j,n}$ we follow an analogous argument to bound $D_j(t_j,\hat\xi_{j,n})$. Specifically, on $\mathcal E_n$, we write $D_j(t_j,\hat\xi_{j,n})\le T'_{1j}+T_{2j}+T'_{3j}$,
where $T_{2j}$ is unchanged and
\[
T'_{1j}:=
\frac{d_j(\bt)}{n-t_j}
\sum_{i=\xi_{j,n}+r_n+1}^{n}\ell_j(X_{i,j,n}),
\qquad
T'_{3j}:=
-\frac{n-\hat\xi_{j,n}}{n-t_j}
\sum_{i=t_j+1}^{\xi_{j,n}-r_n}\ell_j(X_{i,j,n}).
\]
We can follow the argument from above and derive similar upper bounds on $T'_{1j}$ uniformly for all $t_j\in [n-1]$, and on $T'_{3j}$ uniformly for all $t_j\in [n-1]$ and $j\in [K_n]$. Combiningn both cases, we get the following: define the event
\[
\mathcal A_n:=
\left\{D_j(t_j,\hat\xi_{j,n})
\le-c_1d_j(\bt)+T_{2j}+c'r_n\ \text{for all }j\in[K_n]\text{ and all }t_j\in[n-1]
\right\}\cap \mathcal{E}_n.
\]
The preceding bounds imply
\[
\P(\mathcal A_n^c)
\le\sum_{j=1}^{K_n}\left\{
e^{-\gamma(\xi_{j,n}-r_n)}+e^{-\gamma(n-\xi_{j,n}-r_n)}
\right\}+2K_ne^{-\gamma_3r_n}.
\]
By the interiority condition, we have
\[
\xi_{j,n}\wedge(n-\xi_{j,n})\ge \frac{\tau n}{2}
\qquad\text{for all }j\in[K_n].
\]
Therefore,
\[
\sum_{j=1}^{K_n}
\left\{e^{-\gamma(\xi_{j,n}-r_n)}
+e^{-\gamma(n-\xi_{j,n}-r_n)}
\right\}\le
2K_ne^{-\gamma(\tau n/2-r_n)}=o(1).
\]
Also, since $r_n=C\log K_n$, choosing $C$ large enough, $2K_ne^{-\gamma'r_n}=o(1)$. Thus, $\P(\mathcal A_n)\to1$.
On $\mathcal A_n$, summing over $j\in [K_n]$ gives, uniformly over $\bt\in\Rcal$,
\[
S^{(n)}(X,\bt)-\Gamma(\bt,\hat\bxi_n)
\le
-c_1M(\bt)+\sum_{j=1}^{K_n}T_{2j}+c'K_nr_n.
\]
It remains to control the middle term. Define
\[
\mathcal B_n(\eta):=
\left\{
\sum_{j=1}^{K_n}T_{2j}
\le
\frac{2K_nr_n(\sigma_\star^2+J_{\max}^2)^{1/2}}{\eta}
\right\}.
\]
By Markov's inequality and Cauchy--Schwarz, $\P(\mathcal B_n(\eta))\ge1-\eta.$
Consequently, on $\mathcal A_n\cap\mathcal B_n(\eta)$, uniformly over $\bt\in\Rcal$,
\[
S^{(n)}(X,\bt)-\Gamma(\bt,\hat\bxi_n)\le-c_1M(\bt)+C_\eta K_nr_n,
\]
where we let
\[
C_\eta:=c'+\frac{2(\sigma_\star^2+J_{\max}^2)^{1/2}}{\eta}.
\]

Finally, in Step~1, we proved that on the event $\mathcal E_n$,
\[
\inf_{k\neq1}\inf_{\bt\in I_{k,n}}M(\bt)\ge\Delta_n-2r_n.
\]
Therefore, on $\mathcal A_n\cap\mathcal B_n(\eta)$,
\[
\sup_{k\neq1}\sup_{\bt\in I_{k,n}}
\frac{S^{(n)}(X,\bt)-\Gamma(\bt,\hat\bxi_n)}{M(\bt)}
\le
-c_1+\frac{C_\eta K_nr_n}{\Delta_n-2r_n}.
\]
Since $r_n=C\log K_n$ and $\Delta_n\gg K_n\log K_n$, the second term is $o(1)$. Hence, for all sufficiently large $n$, on $\mathcal A_n\cap\mathcal B_n(\eta)$,
\[
\sup_{k\neq1}\sup_{\bt\in I_{k,n}}
\frac{S^{(n)}(X,\bt)-\Gamma(\bt,\hat\bxi_n)}{M(\bt)}
\le -\frac{c_1}{2}.
\]
Therefore,~\eqref{eq:negative_drift_event} holds with $c_0=c_1/2$.

\paragraph{Step~4: Completing the proof.}
By the conclusion of Step~2, and Step~3, that is~\eqref{eq:frozen_pvalue_event} and \eqref{eq:negative_drift_event}, we have that except on an event with probability at most $1-2\eta$, for every
$\bt\in\bigcup_{k\neq1}I_k$, \[
\tilde p_{\bt,n}
\le
\frac{\left(\frac{\sigma_\star^2+J_{\max}^2}{\eta\tau}\right)M(\bt)
}{c_0^2M(\bt)^2}=
\frac{\sigma_\star^2+J_{\max}^2}{\eta\tau c_0^2}\cdot \frac{1}{M(\bt)},
\]
and also that for all sufficiently large $n$,
\[
M(\bt)\ge\frac{\Delta_n}{2}.
\]
Hence, on this event,
\[
\sup_{k\neq1}\sup_{\bt\in I_k}\tilde p_{\bt,n}
\le\frac{2(\sigma_\star^2+J_{\max}^2)}
{\eta\tau c_0^2\Delta_n}\to0
\]
Since $\eta>0$ is arbitrary to start with,
\[
\max_{k\neq1}\max_{\bt\in I_k}\tilde p_{\bt,n}\overset{P}{\longrightarrow}0.
\]
This proves the first part. Finally, for any fixed $\alpha\in(0,1)$, note that
\[
\P\left(\exists k\neq1:k\in\mathcal K^{\croc}_{n,1-\alpha}\right)
=
\P\left(\max_{k\neq1}p_{(k),n}>\alpha\right)
\to0.
\]
Thus, $\P\left(\mathcal K^{\croc}_{n,1-\alpha}\subseteq\{1\}\right)\to1$.
By Theorem~\ref{thm:validity_croc}, we further have
\[
\P\left(1\in\mathcal K^{\croc}_{n,1-\alpha}\right)\ge1-\alpha.
\]
Therefore,
\[
\P\left(\mathcal K^{\croc}_{n,1-\alpha}=\{1\}\right)\ge1-\alpha-o(1).
\]
This proves the second part.
\subsubsection{Auxiliary lemmas}\label{app:lemma_consistency}

\begin{lemma}\label{lem:negative_drift_partial_sums}
Fix $k\in\mathbb N$. Suppose $X_1,\ldots,X_k\stackrel{\mathrm{iid}}{\sim}P_1$ and
$Y_1,\ldots,Y_k\stackrel{\mathrm{iid}}{\sim}P_0$, such that $P_0$ and $P_1$ admit densities $f_0$ and $f_1$ w.r.t. a common dominating measure. We write $\ell(x)=\log(f_0(x)/f_1(x))$ and further, assume that
$0<\mathrm{KL}(P_0\|P_1),\ \mathrm{KL}(P_1\|P_0)<\infty$.
Then there exist constants $c_0>0$ and $\gamma_0>0$ such that
\[
\P\biggl(\frac{1}{k}\sum_{i=1}^k \ell(X_i)\ge -c_0J(P_0,P_1)\biggr)\le e^{-\gamma_0 k},
\qquad
\P\biggl(\frac{1}{k}\sum_{i=1}^k \ell(Y_i)\le c_0J(P_0,P_1)\biggr)\le e^{-\gamma_0 k}.
\]
Moreover, there exist constants $c_1,\gamma_1>0$ such that, for every $x>0$,
\[
\P\left(\sup_{m\ge 1}\sum_{i=1}^m \ell(X_i)>x\right)\le c_1e^{-\gamma_1 x},
\qquad
\P\left(\sup_{m\ge 1}\sum_{i=1}^m -\ell(Y_i)>x\right)\le c_1e^{-\gamma_1 x}.
\]
\end{lemma}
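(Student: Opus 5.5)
The plan is to prove the four bounds by Chernoff-type arguments applied to the log-likelihood ratio $\ell$. The basic identity is that $\E_{P_1}[e^{\lambda\ell(X)}]=\int f_0^{\lambda}f_1^{1-\lambda}\,d\mu$, for $\lambda\in[0,1]$. Consider first the first display for $X_i\sim P_1$. Define $\psi(\lambda):=\log\E_{P_1}[e^{\lambda\ell(X)}]$. Hölder's inequality shows that $\psi$ is finite and convex on $[0,1]$, with $\psi(0)=\psi(1)=0$. Because $\mathrm{KL}(P_1\|P_0)<\infty$, the right derivative at zero exists and equals $\psi'(0^+)=\E_{P_1}\ell=-\mathrm{KL}(P_1\|P_0)<0$.

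Since $P_0\neq P_1$, $\psi$ is strictly convex, so $\psi(\lambda)<0$ on $(0,1)$. Fix $\lambda_0=1/2$, which gives $\psi(1/2)=\log\int\sqrt{f_0f_1}\,d\mu=:-\beta<0$ (a Bhattacharyya coefficient strictly below one). Choose $c_0>0$ small enough that $c_0J(P_0,P_1)/2<\beta/2$; for example $c_0:=\beta/(2J)$, which is legitimate because $J<\infty$. Markov's inequality then gives
\[
\P\Bigl(\sum_{i\le k}\ell(X_i)\ge -c_0Jk\Bigr)\le e^{k(\psi(1/2)+c_0J/2)}\le e^{-\beta k/2},
\]
so $\gamma_0=\beta/2$ works. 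The statement for $Y_i\sim P_0$ is symmetric: $\E_{P_0}[e^{-\lambda\ell}]=\int f_1^{\lambda}f_0^{1-\lambda}\,d\mu$, and the same $\lambda=1/2$ yields the same constants.

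For the maximal inequalities I would use a supermartingale argument rather than summing the fixed-$k$ bounds. With $\lambda=1/2$, the process $Z_m:=\exp\bigl(\tfrac12\sum_{i\le m}\ell(X_i)+m\beta\bigr)$ is a martingale with mean one. It is simpler still to note that $W_m:=\exp(\tfrac12\sum_{i\le m}\ell(X_i))$ is a nonnegative supermartingale with $W_0=1$, because $\E e^{\ell/2}=e^{-\beta}\le1$. Ville's maximal inequality then gives
\[
\P\Bigl(\sup_m\sum_{i\le m}\ell(X_i)>x\Bigr)=\P\Bigl(\sup_m W_m>e^{x/2}\Bigr)\le e^{-x/2},
\]
so $c_1=1$ and $\gamma_1=1/2$. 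This also works with $\lambda=1$, since $\E_{P_1}[f_0/f_1]\le1$ gives a likelihood-ratio supermartingale and hence the bound $e^{-x}$. The $Y$ case uses $f_1/f_0$ under $P_0$ in the same way.

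The only delicate point is making sure the constants do not silently depend on $J$ in a degenerate way. The finite-KL assumptions give $0<J<\infty$, and positivity of KL gives $P_0\neq P_1$, hence $\beta>0$. I do not need the derivative computation beyond this. One point to flag for how the lemma is used: these constants depend on the pair $(P_0,P_1)$. Uniformity over streams $j$ in the main proof would need $\inf_j\beta_j>0$, which does not follow from the bounded-variance assumptions alone. Fixed-pair constants are all the lemma as stated asks for.
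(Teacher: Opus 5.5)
Your proof is correct, and it differs from the paper's in both halves.

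For the fixed-$k$ bounds, the paper gives no argument and cites Lemma B.4 of the \textsc{CONCH} paper. Your Chernoff bound at $\lambda=1/2$ makes the lemma self-contained. It also shows that the pair $(P_0,P_1)$ enters only through the Bhattacharyya gap $\beta=-\log\int\sqrt{f_0f_1}\,d\mu$, with explicit $c_0=\beta/(2J)$ and $\gamma_0=\beta/2$, and no derivative computation is needed.

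For the maximal inequalities, both arguments start from the H\"older bound $\E_{P_1}[e^{\theta\ell(X)}]\le 1$. The paper then uses $M'(0^+)=-\mathrm{KL}(P_1\|P_0)<0$ to find $\theta_\star$ with $\rho=\E_{P_1}[e^{\theta_\star\ell(X)}]<1$. It applies Chernoff to each partial sum and union-bounds over $m$. This gives the pair-dependent constants $c_1=\rho/(1-\rho)$ and $\gamma_1=\theta_\star$, which degenerate as $\rho\to1$. You instead apply Ville's inequality to the nonnegative supermartingale $\prod_{i\le m}(f_0/f_1)^{\lambda}(X_i)$. This needs only $\E_{P_1}[e^{\lambda\ell(X)}]\le 1$, so not even $P_0\neq P_1$, and with $\lambda=1$ it gives the universal constants $c_1=\gamma_1=1$. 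That universality is exactly what the main proof needs when it union-bounds this maximal inequality over $K_n$ streams with common constants $c',\gamma'$. The paper's route buys only the avoidance of martingale arguments.

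On your closing caveat: it is accurate under the main-text assumptions alone. Under the appendix's standing assumptions $J_{\min}>0$, $J_{\max}<\infty$ and $\sigma_\star<\infty$, however, one does get $\inf_j\beta_j>0$. Suppose $\beta_j\to0$ along a subsequence. Since $2(1-e^{-\beta_j})=\E_{P_{1,j}}(e^{\ell_j/2}-1)^2=\E_{P_{0,j}}(e^{-\ell_j/2}-1)^2$, we get $\ell_j\to0$ in probability under both $P_{0,j}$ and $P_{1,j}$. The bound $\E\ell_j^2\le\sigma_\star^2+J_{\max}^2$ gives uniform integrability, which forces $J_j\to0$, a contradiction. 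So your explicit dependence on $\beta$ actually yields the uniform-in-$j$ version of the first part that the main proof relies on.
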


\begin{proof}
For a proof of the first part, we refer the reader to Lemma $B.4$ in \cite{hore2026conformal}.
For the second part, observe that for any $\theta\in(0,1)$, by H\"older's inequality,
\[
M(\theta):=\E_{X\sim P_1}[e^{\theta\ell(X)}]
=
\int f_0^\theta(x)f_1^{1-\theta}(x)\,\mathsf dx
\le
\left(\int f_0(x)\,\mathsf dx\right)^\theta
\left(\int f_1(x)\,\mathsf dx\right)^{1-\theta}
\le 1.
\]
Moreover, $M(0)=1$ and
\[
\lim_{\theta\to0^+}M'(\theta)=\E_{X\sim P_1}[\ell(X)]=-\mathrm{KL}(P_1\|P_0)<0.
\]
Hence there exists $\theta_\star\in(0,1)$ such that
\[
\rho:=\E_{X\sim P_1}[e^{\theta_\star\ell(X)}]=M(\theta_\star)<1.
\]
Next, note that for any $x>0$, by a union bound and Chernoff's inequality,
\begin{align*}
\P\left(\sup_{m\ge1}\sum_{i=1}^m\ell(X_i)>x\right)
&\le
\sum_{m=1}^\infty
\P\left(\sum_{i=1}^m\ell(X_i)>x\right)\\
&\le
\sum_{m=1}^\infty
e^{-\theta_\star x}\,\E\left[\exp\left(\theta_\star\sum_{i=1}^m\ell(X_i)\right)\right]\\
&=e^{-\theta_\star x}\sum_{m=1}^\infty \rho^m=
\frac{\rho}{1-\rho}e^{-\theta_\star x}.
\end{align*}
Thus the first maximal bound holds with
\[
c_1=\frac{\rho}{1-\rho},\qquad \gamma_1=\theta_\star.
\]
The second maximal bound follows by a similar argument.

\end{proof}

\begin{lemma}\label{lem:deterministic_bound_onn_tildep}
For any $\bt\in \Rcal$, we have deterministically
\begin{equation}\label{eq:deterministic_bound_onn_tildep}
\tilde{p}_{\bt,n}\le \frac{\sum_{k=1}^K \left(\frac{n}{\min(\hat{\xi}_{k,n},n-\hat{\xi}_{k,n})} \cdot |t_k-\hat\xi_{k,n}|\cdot  \hat{v}_{n,k}\right)}{(S^{(n)}(\bX,\bt)-\Gamma(\bt,\hat\bxi_n))^2} + \one{S^{(n)}(\bX,\bt)-\Gamma(\bt,\bxi_n)\ge 0}.
\end{equation}
\end{lemma}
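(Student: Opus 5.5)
The bound will follow from a Chebyshev-type argument applied to the permutation distribution of the frozen score. Fix $\bt$ and write $s(\bx)$ for the frozen score \eqref{eqn:score-for_consistency}, with $\hat\bxi_n$ held fixed as the MLE computed on the original data. Let $\pi\sim\mathrm{Unif}(\Pi_\bt)$. Then $\tilde p_{\bt,n}=\P_\pi\bigl(s(\pi(\bX))\le s(\bX)\bigr)$, where the probability is over $\pi$ only, with $\bX$ held fixed.

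Since $s$ is linear in the $\ell_k(x_{i,k})$, the first step is to compute $\E_\pi[s(\pi(\bX))]$ directly. For stream $k$ with $t_k>\hat\xi_{k,n}$, the summation block $\{\hat\xi_{k,n}+1,\dots,t_k\}$ lies inside the left segment $[1,t_k]$. Under a uniform within-segment permutation, each entry in that block has mean $\frac1{t_k}\sum_{i\le t_k}\ell_k(X_{i,k,n})$. Hence the block sum has mean $\frac{d_k}{t_k}\sum_{i\le t_k}\ell_k$, where $d_k=|t_k-\hat\xi_{k,n}|$. The case $t_k<\hat\xi_{k,n}$ is symmetric: the block lies in the right segment, and the sign ${\rm sgn}(t_k-\hat\xi_{k,n})=-1$ gives $-\frac{d_k}{n-t_k}\sum_{i>t_k}\ell_k$. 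Summing over $k$ reproduces exactly $\Gamma(\bt,\hat\bxi_n)$, so $\E_\pi s(\pi(\bX))=\Gamma(\bt,\hat\bxi_n)$. I will check this bookkeeping against the definition of $\Gamma$ carefully.

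Next I split into two cases. If $s(\bX)-\Gamma\ge0$, the indicator term makes the bound trivial, since $\tilde p\le1$. Otherwise set $\delta=\Gamma-s(\bX)>0$. Then $\{s(\pi\bX)\le s(\bX)\}\subseteq\{|s(\pi\bX)-\Gamma|\ge\delta\}$, and Chebyshev gives $\tilde p\le \mathrm{Var}_\pi(s(\pi\bX))/\delta^2$.

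It remains to bound the variance. The streams are permuted independently, so $\mathrm{Var}_\pi(s)=\sum_k \mathrm{Var}_{\pi_k}(\text{block sum}_k)$. For stream $k$, the block sum is a simple-random-sample-without-replacement sum of $d_k$ draws from a population of size $N$, where $N=t_k$ or $n-t_k$. Its variance is at most $d_k\cdot\frac1N\sum_{i\in\text{segment}}\ell_k^2 \le d_k\cdot\frac{n}{N}\hat v_{n,k}$, using the finite-population correction $\le1$ and the fact that the population variance is at most the population second moment.

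The main obstacle is the denominator. Here $N$ is $t_k$ or $n-t_k$, which must be compared with $\min(\hat\xi_{k,n},n-\hat\xi_{k,n})$. When $t_k>\hat\xi_{k,n}$ we have $N=t_k>\hat\xi_{k,n}\ge\min(\hat\xi_{k,n},n-\hat\xi_{k,n})$. When $t_k<\hat\xi_{k,n}$ we have $N=n-t_k>n-\hat\xi_{k,n}$. So in both cases $n/N\le n/\min(\hat\xi_{k,n},n-\hat\xi_{k,n})$, which gives \eqref{eq:deterministic_bound_onn_tildep}.

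The indicator in the displayed statement is written with $\bxi_n$ rather than $\hat\bxi_n$. I will read it as $\hat\bxi_n$, which is consistent with how the lemma is used in Step 2, and note this as a typo.
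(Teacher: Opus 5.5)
Your proposal is correct and follows essentially the same route as the paper. Both arguments identify the permutation mean of the frozen score as $\Gamma(\bt,\hat\bxi_n)$ and bound each stream's without-replacement block variance by $|t_k-\hat\xi_{k,n}|\cdot\frac{n}{\min(\hat\xi_{k,n},n-\hat\xi_{k,n})}\hat v_{n,k}$. Both then apply Chebyshev when $S^{(n)}(\bX,\bt)-\Gamma(\bt,\hat\bxi_n)<0$, and otherwise use $\tilde p_{\bt,n}\le 1$.

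Your reading of the indicator with $\hat\bxi_n$ is the right one, since the paper's proof and its use in Step~2 both use $\hat\bxi_n$. You also track the sign for $t_k<\hat\xi_{k,n}$ and spell out the comparison of the segment size with $\min(\hat\xi_{k,n},n-\hat\xi_{k,n})$, which the paper leaves implicit.
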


\begin{proof}
Fix $\bt\in \Rcal$ and write $d_{k}(\bt)=|t_k-\hat\xi_{k,n}|$ We start by writing
\[
S^{(n)}(\bx,\bt)=\sum_{k=1}^K {\rm sgn}(t_k-\hat{\xi}_{k,n})\cdot\biggl\{\sum_{i=t_k\wedge \hat{\xi}_{k,n}(\bx)+1}^{t_k\vee \hat{\xi}_{k,n}}\ell_k(x_{i,k})\biggr\},
\]
i.e., the inner sum is a sum of $\ell_k$ computed over a block of $d_k(\bt)$ observations. Thereby, the permutation $p$-value \eqref{eqn:pvalue_for_consistency} can be equivalently expressed as
\[
\tilde{p}_{\bt,n}:=\P_{\pi\sim \mathrm{Unif}(\Pi_{\bt})}\!\left(S^{(n)}(\pi(\bX),\bt)\le S^{(n)}(\bX,\bt)\ \middle|\ \bX\right).
\]
Let for each $k\in [K]$, 
\[
\mathcal{X}_k=\begin{cases}
    \{X_{1,k,n},\ldots,X_{t_k,k,n}\} & \text{if}~~t_k>\hat\xi_{k,n},\\
    \{X_{t_k+1,k,n},\ldots,X_{n,k,n}\} & \text{if}~~t_k<\hat\xi_{k,n},
\end{cases}.
\]
and suppose for each $k\in [K]$, $\tilde{X}_{1,k,n},\ldots,\tilde{X}_{d_k(\bt),k,n}$ are drawn WOR from $\mathcal{X}_k$. Therefore, we can equivalently express $\tilde{p}_{\bt,n}$ as
\[
\tilde{p}_{k,n}
=\P\!\left(\sum_{k=1}^K\sum_{j=1}^{d_k(\bt)}\ell_k(\tilde{X}_{j,k,n})\le S^{(n)}(\bX,\bt)\ \middle|\ \bX\right).
\]
Furthermore, observe that
\[
\E\left[\sum_{k=1}^K\sum_{j=1}^{d_k(\bt)}\ell_k(\tilde{X}_{j,k,n})\ \middle| \ \bX\right]=\Gamma(\bt,\hat{\bxi}_n)
\]
and that
\[
\mathrm{Var}\!\left(\sum_{k=1}^K\sum_{j=1}^{d_k(\bt)}\ell_k(\tilde{X}_{j,k,n})\ \middle| \ \bX\right)=\sum_{k=1}^K\mathrm{Var}\!\left(\sum_{j=1}^{d_k(\bt)}\ell_k(\tilde{X}_{j,k,n})\ \middle| \ \bX\right).
\]
Moreover, 
\begin{align*}
    \mathrm{Var}\!\left(\sum_{j=1}^{d_k(\bt)}\ell_k(\tilde{X}_{j,k,n})\ \middle| \ \bX\right)&\le d_k(\bt)\cdot \mathrm{Var}(\ell_k(\tilde{X}_{j,k,n})\ | \ \bX)\\
    &\le  d_k(\bt)\cdot \begin{cases}
    \frac{1}{t_k}\sum_{i=1}^{t_k} \ell^2_{k}(X_{i,k,n}) & \text{if } t_k>\hat{\xi}_{k,n}\\
    \frac{1}{n-t_k}\sum_{i=t_k+1}^{n} \ell^2_{k}(X_{i,k,n}) & \text{if } t_k<\hat{\xi}_{k,n}
    \end{cases}\\
    &\le d_k(\bt)\cdot \frac{n}{\min(\hat{\xi}_{k,n},n-\hat{\xi}_{k,n})} \, \hat{v}_{n,k}
\end{align*}
where recall $\hat{v}_{n,k}=\frac{1}{n}\sum_{i=1}^{n} {\ell_k}^{\,2}(X_{i,k,n})$.
Hence, writing
\begin{multline*}
\P\!\left(\sum_{k=1}^K\sum_{j=1}^{d_k(\bt)}\ell_k(\tilde{X}_{j,k,n})\le S^{(n)}(\bX,\bt)\ \middle|\ \bX\right)\\
    \le \P\!\left(\sum_{k=1}^K\sum_{j=1}^{d_k(\bt)}\ell_k(\tilde{X}_{j,k,n})-\Gamma(\bt,\hat{\bxi}_n)\le S^{(n)}(\bX,\bt)-\Gamma(\bt,\hat{\bxi}_n)\ \middle|\ \bX\right).
\end{multline*}
Observe that whenever $S^{(n)}(\bX,\bt)-\Gamma(\bt,\hat{\bxi}_n)<0$, then by Chebyshev's inequality,
\[
\tilde{p}_{\bt,n} \le \,\frac{\sum_{k=1}^K \frac{n}{\min(\hat{\xi}_{n,k},n-\hat{\xi}_{n,k})} \cdot d_k(\bt) \hat{v}_{n,k}}{(S^{(n)}(\bX,\bt)-\Gamma(\bt,\hat\bxi_n))^2}.
\]
On the other hand, if $S^{(n)}(\bX,\bt)-\Gamma(\bt,\hat\bxi_n)\ge 0$, then $\tilde{p}_{\bt,n}\le 1$. This completes the proof. 
\end{proof}

\begin{lemma}\label{lem:pvalue_inequality}
    Let $\{p_{\bt,n}\}$ be the \croc{} $p$-values computed based on the oracle CPP score. Then, it holds that deterministically,
    \[
    p_{t,n}\ \le\ \tilde{p}_{t,n}, \qquad \text{for}~~t\in \Rcal.
    \]
\end{lemma}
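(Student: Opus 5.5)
The plan is to compare the two $p$-values term by term in the permutation sum, through a pointwise inequality between the recomputed-MLE score and the frozen-MLE score. Fix $\bt\in\Rcal$. Write $L_k(s;\bx):=\sum_{i\le s}\ell_k(x_{i,k})$, and note that for each $s$ the oracle log-likelihood $\sum_{i\le s}\log f_{0,k}+\sum_{i>s}\log f_{1,k}$ equals $L_k(s;\bx)$ up to an additive term that does not depend on $s$. Then the oracle score in its simplified form is
\[
S(\bx,\bt)=\sum_{k=1}^K\bigl\{L_k(t_k;\bx)-L_k(\hat\xi_{k,n}(\bx);\bx)\bigr\}.
\]
The frozen score \eqref{eqn:score-for_consistency} is the same expression with $\hat\xi_{k,n}(\bx)$ replaced by the fixed value $\hat\xi_{k,n}:=\hat\xi_{k,n}(\bX)$. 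Call these $S^{\rm rec}$ and $S^{\rm fr}$.

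First, on the observed data the two scores agree: $S^{\rm rec}(\bX,\bt)=S^{\rm fr}(\bX,\bt)$, because the frozen estimator is by definition the MLE computed on $\bX$. Both $p$-values therefore compare against the same threshold $S(\bX,\bt)$.

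Second, for every $\pi\in\Pi_\bt$ I would use that $\hat\xi_{k,n}(\pi(\bX))$ maximises $L_k(\cdot\,;\pi(\bX))$. This gives
\[
L_k\bigl(\hat\xi_{k,n}(\pi(\bX));\pi(\bX)\bigr)\ \ge\ L_k\bigl(\hat\xi_{k,n};\pi(\bX)\bigr),
\]
and summing over $k$ yields $S^{\rm rec}(\pi(\bX),\bt)\le S^{\rm fr}(\pi(\bX),\bt)$ pointwise in $\pi$. Combined with the first step, this produces a containment between the indicator events in \eqref{eq:pval_croc} and \eqref{eqn:pvalue_for_consistency}. Averaging that containment over $\Pi_\bt$ then gives a deterministic inequality between $p_{\bt,n}$ and $\tilde p_{\bt,n}$.

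The step I expect to be the real obstacle is the direction of this inequality. With the ordering just derived, $\{S^{\rm fr}(\pi\bX)\le S(\bX)\}\subseteq\{S^{\rm rec}(\pi\bX)\le S(\bX)\}$, which would give $p_{\bt,n}\ge\tilde p_{\bt,n}$. That is the opposite of what is stated. Before writing the proof, I would therefore recheck three things: the sign convention of $\ell_k$ (the paper uses both $\log(f_0/f_1)$ and $f_0/f_1$), whether $\hat\xi$ in \eqref{eq:mle} is an argmax or effectively an argmin of $L_k$ under the adopted sign, and whether $p$ counts $\le$ or $\ge$.

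If a convention can be found under which the normalising term enters the score with the opposite sign, then the maximising property of the recomputed MLE gives $S^{\rm rec}(\pi\bX)\ge S^{\rm fr}(\pi\bX)$. The containment then reverses, and averaging gives $p_{\bt,n}\le\tilde p_{\bt,n}$ as claimed. If no such convention exists, the lemma must be weakened. The fallback I would try is to show that the two scores differ by at most a uniformly $O_P(\log K_n)$ amount, using the localisation of $\hat\bxi$ from Theorem~\ref{thm:mle_oracle_consistency}. The frozen bound in Lemma~\ref{lem:deterministic_bound_onn_tildep} would then go through with a shifted threshold, and that shift would be absorbed by the $\Delta_n\gg K_n\log K_n$ margin used in Step~3 of the sharpness proof.
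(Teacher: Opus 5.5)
Your pointwise comparison is exactly the paper's argument. The paper writes both scores as $\sum_k\{L_k(t_k;\cdot)-L_k(\hat\xi_{k,n};\cdot)\}$ (your $L_k$). It uses that $\hat\xi_{k,n}(\pi(\bX))$ maximises $L_k(\cdot\,;\pi(\bX))$ to get $S(\pi(\bX),\bt)\le S^{(n)}(\pi(\bX),\bt)$, with equality at $\bX$. It then simply asserts $p_{\bt,n}\le\tilde p_{\bt,n}$.

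Your objection to that last step is right, and no convention check will rescue it. The conventions are consistent: $\ell_k=\log(f_{0,k}/f_{1,k})$ (the missing log in the appendix is a typo), \eqref{eq:mle} maximises $L_k$, the score is the logarithm of \eqref{eq:practical_CPP_score}, and \eqref{eq:pval_croc} counts $\le$. The direction is also intrinsic: re-maximising the normaliser can only lower the permuted scores while leaving the observed score fixed. What holds deterministically is $\tilde p_{\bt,n}\le p_{\bt,n}$, which points the wrong way for the reduction in Step~1 of the proof of Theorem~\ref{thm:conch_consistency_oracle}.

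In fact the lemma as stated is false. Since $\pi_k$ maps $\{1,\dots,t_k\}$ to itself, $L_k(t_k;\pi(\bX))=L_k(t_k;\bX)$. Hence $p_{\bt,n}$ is the fraction of $\pi\in\Pi_\bt$ with $\sum_k\max_sL_k(s;\pi(\bX))\ge\sum_k\max_sL_k(s;\bX)$, while $\tilde p_{\bt,n}$ replaces $\max_sL_k(s;\pi(\bX))$ by $L_k(\hat\xi_{k,n};\pi(\bX))$.

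For a concrete instance, take $K=1$, $n=4$, $t=1$ and $\ell$-values $(1,2,-5,0)$. These are attainable in a Gaussian mean-shift model, where $\ell$ is affine; pad with constant streams if you want $K\ge2$. Then $\hat\xi=2$ and the threshold is $\max_sL(s;\bX)=3$. Among the six orderings of positions $2,3,4$, exactly three give $\max_sL(s;\pi(\bX))\ge3$, but only two give $L(2;\pi(\bX))\ge3$; the ordering $(0,2,-5)$ separates them. So $p_{\bt,n}=1/2>1/3=\tilde p_{\bt,n}$.

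So the first branch of your plan cannot succeed. The fallback proves a different statement, and the mechanism you propose for it also fails. Theorem~\ref{thm:mle_oracle_consistency} localises $\hat\bxi_n(\bX)$, not $\hat\bxi_n(\pi(\bX))$. For $\pi\in\Pi_\bt$ with $t_k$ far from $\xi_{k,n}$, the shuffled blocks mix pre- and post-change observations.

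For example, take $t_k>\xi_{k,n}$ with $t_k-\xi_{k,n}=o(n)$. The block $\{1,\dots,t_k\}$ then has mean increment at least roughly $\tau\,\mathrm{KL}(P_{0,k}\|P_{1,k})>0$. The gap satisfies $\max_sL_k(s;\pi(\bX))-L_k(\hat\xi_{k,n};\pi(\bX))\ge L_k(t_k;\pi(\bX))-L_k(\hat\xi_{k,n};\pi(\bX))$. The right-hand side is a without-replacement sum with mean of order $|t_k-\hat\xi_{k,n}|$, so for most $\pi$ the gap is of that order.

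Summed over streams, the discrepancy between the two scores can therefore be of order $M(\bt)$. That is the same order as the drift $c_1M(\bt)$ in Step~3, not $O_P(\log K_n)$. It is not a lower-order shift that the margin $\Delta_n\gg K_n\log K_n$ can absorb.

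A repair has to bound $p_{\bt,n}$ itself. The representation above suggests showing directly that, for wrong-root $\bt$, $\sum_k\max_sL_k(s;\pi(\bX))$ falls short of $\sum_k\max_sL_k(s;\bX)$ by an amount of order $M(\bt)$ for most $\pi\in\Pi_\bt$.
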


\begin{proof}
    We start with observing that the original score $S(\bX,\bt)$  and the frozen score $S^{(n)}(\bX,\bt)$ can be equivalently written as 
    \begin{align*}
        S(\bx,\bt)\ &= \ \sum_{k=1}^K\left(\sum_{i=1}^{t_k} \ell_{k}(x_{i,k,n})-\sum_{i=1}^{\hat{\xi}_{k,n}(\bx)} \ell_{k}(x_{i,k,n})\right), \\ S^{(n)}(\bx,\bt)\ &= \ \sum_{k=1}^K\left(\sum_{i=1}^{t_k} \ell_{k}(x_{i,k,n})-\sum_{i=1}^{\hat{\xi}_{k,n}} \ell_{k}(x_{i,k,n})\right),
    \end{align*}
    where $(\hat{\xi}_{1,n},\ldots,\hat{\xi}_{K,n})=\hat{\bxi}_n=\hat{\bxi}_n(\bX)$.
    Recall that by \eqref{eq:mle}, for any $\bx\in \prod_{k=1}^K\mathcal{X}_k^n$,
    \[
    \hat{\xi}_{k,n}(\bx)\in \argmax_{s\in [n-1]} \sum_{i=1}^s \ell_{k}(x_{i,k,n}).
    \]
    Hence, for each $k\in [K]$, and for any permutation $\pi\in \mathcal{S}_n$, we have that 
    \[
    \sum_{i=1}^{\hat{\xi}_{k,n}(\pi(\bx))} \ell_{k}((\pi(\bx))_{i,k,n})\ \ge \sum_{i=1}^{\hat{\xi}_{k,n}(\bx)} \ell_{k}((\pi(\bx))_{i,k,n}).
    \]
    Consequently, for any $t\in \Rcal$ and any permutation $\pi\in \Pi_t$,
    \[
    S(\pi(\bX),\bt)\leq S^{(n)}(\pi(\bX),\bt),\qquad S(\bX,\bt)=S^{(n)}(\bX,\bt).
    \]
    Hence, for all $t\in \Rcal$, deterministically, $p_{\bt,n}\leq \tilde{p}_{\bt,n}$ as required.
\end{proof}

\end{document}